\newtheorem{definition}{Definition}[section]
\newtheorem{theorem}[definition]{Theorem}
\newtheorem{proposition}[definition]{Proposition}
\newtheorem{example}[definition]{Example} 
\newenvironment{proof}{%
\par
\noindent
\textsc{Proof. }
\noindent}{\hfill\(\qed\)}
\newenvironment{proofsummary}{%
\par
\noindent
\textsc{Proof (Sketch). }
\noindent}{\hfill\(\qed\)}
\newenvironment{append}{}{}
\newcommand\blfootnote[1]{%
  \begingroup
  \renewcommand\thefootnote{}\footnote{#1}%
  \addtocounter{footnote}{-1}%
  \endgroup
}
\newcommand{\Nat}{{\mathbb N}}
\newcommand{\Real}{{\mathbb R}}
\newcommand{\D}{\mathbb{D}}
\renewcommand{\H}{\mathbb{H}}
\newcommand{\J}{\mathbb{J}}
\renewcommand{\P}{\mathbb{P}}
\newcommand{\Hip}{\overline{\mathbb{H}}_1}
\newcommand{\Jip}{\mathbb{D}_{\le1}}
\newcommand{\conv}{\mathrm{conv}}
\newcommand{\pto}{\rightharpoondown}
\newcommand{\dom}{\mathrm{dom}}
\newcommand{\EE}{\mathcal{E}}
\newcommand{\FF}{\mathcal{F}}
\newcommand{\GG}{\mathcal{G}}
\newcommand{\TT}{\mathcal{T}}
\newcommand{\exit}{\mathbf{\Phi}}
\newcommand{\init}{\mathbf{in}}
\newcommand{\sched}{\mathbf{Sched}}
\newcommand{\unity}{\delta}
\newcommand{\sem}[1]{[\![#1]\!]}
\newcommand{\seq}{\cdot}
\newcommand{\triple}[3]{\{#1\}#2\{#3\}}
\newcommand{\pc}[1]{{\oplus_{\!#1}}}
\newcommand{\convol}{\star}
\newcommand{\bks}{*}
\newcommand{\refby}{\sqsubseteq}
\newcommand{\refbyh}{\sqsubseteq_{\mathbb{H}}}
\newcommand{\simref}{\refby_{\mathrm{sim}}}
\newcommand{\thread}{\mathtt{thd}}
\newcommand{\Eqn}[1]{Eqn.~(\ref{#1})}
\newcommand{\Prop}[1]{Prop.~\ref{#1}}
\newcommand{\Thm}[1]{Thm.~\ref{#1}}
\newcommand{\Defs}[1]{Def. #1}
\newcommand{\Sec}[1]{Sec. \ref{#1}}
\newcommand{\Fig}[1]{Fig.~\ref{#1}}
\newcommand{\Gx}{\color{blue}}
\newcommand{\Tx}{\color{orange}}
\let\OLDthebibliography\thebibliography
\renewcommand\thebibliography[1]{
  \OLDthebibliography{#1}
  \setlength{\parskip}{0pt}
  \setlength{\itemsep}{0pt plus 0.3ex}
}
\begin{document}

\begin{frontmatter}
\title{Probabilistic Rely-guarantee Calculus}

\author[mq]{Annabelle McIver}
\ead{annabelle.mciver@mq.edu.au}
\author[mq]{Tahiry Rabehaja}
\ead{tahiry.rabehaja@mq.edu.au}
\author[sh]{Georg Struth}
\blfootnote{This research was supported by an iMQRES from Macquarie University, the ARC Discovery Grant DP1092464 and the EPSRC Grant EP/J003727/1.}
\ead{g.struth@sheffield.ac.uk}
\address[mq]{Department of Computing, Macquarie University, Australia}
\address[sh]{Department of Computer Science, University of Sheffield, United Kingdom}

\begin{abstract} Jones' rely-guarantee calculus for shared variable concurrency is extended to include probabilistic behaviours. We use an algebraic approach that is based on a combination of probabilistic Kleene algebra with concurrent Kleene algebra.  \todo[disable]{adapts with concurrent KA sounds odd. Why not write: We use an algebraic approach that is based on a combination of probabilistic Kleene algebra with concurrent Kleene algebra?} Soundness of the algebra is shown relative to a general probabilistic event structure semantics. The main contribution of this paper is a collection of rely-guarantee rules built on top of that semantics. In particular, we show how to obtain bounds on probabilities of correctness by deriving quantitative extensions of rely-guarantee rules. 
\todo[inline]{I don't fully understand the previous sentence: probability bounds should come from the probabilistic part of the semantics, not from the concurrent one... {\Tx Does that make more sense?}} 
The use of these rules is illustrated by a detailed verification of a simple probabilistic concurrent program: a faulty Eratosthenes sieve.  \end{abstract}

\begin{keyword}probabilistic programs, concurrency, rely-guarantee, program verification, program semantics, Kleene algebra, event structures.
\end{keyword}

\end{frontmatter}


\section{Introduction}
The rigorous study of concurrent systems remains a difficult task due to the intricate interactions {\Gx and interferences} between their components. A formal framework for concurrent systems ultimately depends on the kind of concurrency considered. Jones' rely-guarantee calculus provides a mathematical foundation for {\Gx proving} the correctness of programs with shared variables concurrency{\Gx in compositional fashion}~\cite{Jon81}. This paper extends Jones' calculus to the quantitative correctness of probabilistic concurrent programs.

Probabilistic programs have become popular due to their ability to express quantitative rather than limited qualitative properties. Probabilities are particularly important for protocols that rely on the unpredictability of probabilistic choices.
\todo[disable]{protocols requiring security?} 
The sequential probabilistic semantics, originating with Kozen~\cite{Koz81} and Jones~\cite{Jon92}, have been extended with nondeterminism~\cite{He97,Mci04}, to yield methods for quantitative reasoning based on partial orders. \todo[inline]{What does using partial orders mean? {\Tx Rewritten}} 

{\Gx
We aim to obtain similar methods for reasoning in compositional ways about probabilistic programs with shared variable concurrency. In algebraic approaches, compositionality arises quite naturally through congruence or monotonicity properties of algebraic operations such as sequential and concurrent composition or probabilistic choice.

It is well known that compositional reasoning is nontrivial both for concurrent and for sequential probabilistic systems.  In the concurrent case, the obvious source of non-compositionality is communication or interaction between components.  In the rely-guarantee approach, interference conditions are imposed between individual components and their environment in order to achieve compositionality. Rely conditions account for the global effect of the environment's interference with a component; guarantee conditions express the effect of a particular component on the environment.  Compositionality is then obtained by considering rely conditions within components and guarantee conditions within the environment.
}

In the presence of probabilistic behaviours, a problem of congruence (and hence non-compositionality) arises when considering the natural extension of trace-based semantics to probabilistic automata~\cite{Seg94}, where a standard work-around is to define a partial order based on simulations.
\todo[inline]{I would prefer to see a conceptual explanation instead of a mathematical one, as in the r-g case.}

In this paper, we define a similar construct to achieve compositionality. However, simulation-based equivalences are usually too discriminating for program verification. Therefore, we also use a weaker semantics that is essentially based on sequential behaviours. Such a technique has been motivated elsewhere~\cite{Arm14}, where the sequential order is usually not a congruence. Therefore, the simulation-based order is used for properties requiring composition while the second order provides a tool that captures the sequential behaviours of the system.

Concurrent Kleene algebra~\cite{Hoa11,Arm14} provides an algebraic account of Jones' rely-guarantee framework. Algebras provide an abstract view of a program by focusing more on control flows rather than data flows. All the rely-guarantee rules described in~\cite{Hoa11,Arm14} were derived by equational reasoning from a finite set of algebraic axioms. Often, the verification of these axioms on an intended semantics is easier than proving the inference rules directly in that semantics. Moreover, every structure satisfying these laws will automatically incorporate a direct interpretation of the rely-guarantee rules, as well as additional rules that can be used for program refinement. Therefore, we also adopt an algebraic approach to the quantitative extension of rely-guarantee, that is, we establish some basic algebraic properties of a concrete event structure model and derive the rely-guarantee rules by algebraic reasoning.

In summary, the main contribution of this paper is the development of a mathematical foundation for \textit{probabilistic rely-guarantee calculi}. The inference rules are expressed algebraically, and we illustrate their use on an example based on the Sieve of Eratosthenes which incorporates a probability of failure. We also outline two rules that provide probabilistic lower bounds for the correctness of the concurrent execution of multiple components. 

A short summary of the algebraic approach to rely-guarantee calculus and the extension to probabilistic programs are found respectively in Section~\ref{sec:standard-rg} and~\ref{sec:prgc}-\ref{sec:prg-rules}. Section~\ref{sec:sequential-programs} and~\ref{sec:es} are devoted to the construction of a denotational model for probabilistic concurrent programs.
Section~\ref{sec:application} closes this paper with a detailed verification of the faulty Eratosthenes sieve. 



\section{Non-probabilistic rely-guarantee calculus}\label{sec:standard-rg}
The rely-guarantee approach, originally put forward by Jones~\cite{Jon81}, is a compositional method for developing and verifying large concurrent systems. An algebraic formulation of the approach has been proposed recently in the context of concurrent Kleene algebras~\cite{Hoa11}.  In a nutshell, a bi-Kleene algebra is an algebraic structure $(K,+,\cdot,\|,0,1,^\ast,^{(\ast)})$ such that $(K,+,\cdot,0,1,^\ast)$ is a Kleene algebra and $(K,+,\|,0,1,^{(\ast)})$ is a commutative Kleene algebra. The axioms of Kleene algebra and related structures are in Appendix~\ref{A:ka} 
.

\todo[inline]{I think such an appendix should be added to summarise the algebraic laws used in this paper.}

Intuitively, the set $K$ models the actions a system can take; the operation $(+)$ corresponds to the nondeterministic choice between actions, $(\cdot)$ to their sequential composition and $(\|)$ to their parallel or concurrent composition. The constant $0$, the unit of addition, models the abortive action, $1$, the unit of sequential and concurrent composition, the ineffective action $\mathtt{skip}$. The operation $(^\ast)$ is a sequential finite iteration of actions; the corresponding parallel finite iteration operation $(^{(\ast)})$ is not considered further in this article. Two standard models of bi-Kleene algebras are languages, with $(+)$ interpreted as language union, $(\cdot)$ as language product, $(\|)$ as shuffle product, $0$ as the empty language, $1$ as the empty word language and $(^\ast)$ as the Kleene star, and pomset languages under operations similarly to those in Section~\ref{S:seqred} below (cf. ~\cite{BloomEsik}).

Language-style models with interleaving or shuffle also form the standard semantics of rely-guarantee calculi. In that context, traces are typically of the form $(s_1,s_1'),(s_2,s_2')\dots (s_k,s_k')$, where the $s_i$ and $s_i'$ denote states of a system, pairs $(s_i,s_i')$ correspond to internal transitions of a component, and fragments $s_i'),(s_{i{+}1}$ to transitions caused by interferences of the environment. Behaviours of a concurrent system are associated with sets of such traces.

With semantics for concurrency in mind, a generalised encoding of the validity of Hoare triples becomes useful:  
\begin{equation*}
  \triple{P}{S}{Q} \Leftrightarrow P{\cdot} S \le Q,
\end{equation*}
where $P\le Q \Leftrightarrow P{\cup} Q = Q$.  It has been proposed originally by Tarlecki~\cite{Tar85} for sequential programs with a relational semantics. In contrast to Hoare's standard approach, where $P$ and $Q$ are assertions and $S$ a program, all three elements are now allowed to be programs. In the context of traces, $\triple{P}{S}{Q}$ holds if all traces that are initially in $P$ and then in $S$ are also in $Q$. This comprises situations where program $P$ models traces ending in a set of states $p$ (a precondition) and $Q$ models traces ending in a set of states $q$ (a postcondition). The Hoare triple then holds if all traces establishing precondition $p$ can be extended by program $S$ to traces establishing postcondition $q$, whenever $y$ terminates, as in the standard interpretation. We freely write $\triple{p}{S}{q}$ in such cases. It turns out that all the inference rules of Hoare logic except the assignment rule can be derived in the setting of Kleene algebra~\cite{Hoa11}.

For concurrency applications, the algebraic encoding of Hoare triples has been expanded to Jones quintuples $\triple{P\ R}{S}{G\ Q}$, also written $R,G\vdash\triple{P}{S}{Q}$, with respect to rely conditions $R$ and guarantee conditions $G$~\cite{Hoa11}. The basic intuition is as follows. A rely condition $R$ is understood as a special program that constrains the behaviour of a component $S$ by executing it in parallel as $R\| S$. This is consistent with the above trace interpretation where parallel composition is interpreted as shuffle and gaps in traces correspond to interferences by the environment. Typical properties of relies are $1\le R$ (where $1$ is $\mathtt{skip}$) and $R^\ast = R{\cdot} R= R\| R = R$. Moreover, relies distribute over nondeterministic choices as well as sequential and concurrent compositions: $R\| (S{+}T)= R\|S {+} R\|T$, $R\| (S{\cdot} T)=(R\| S) {\cdot} (R\| T)$ and $R\|(S\| T)=(R\| S)\| (R\| T)$, hence they apply to all subcomponents of a given component~\cite{Arm14}. A guarantee $G$ of a given component $S$ is only constrained by the fact that it should include all behaviours of $S$, that is, $S\le G$.

\todo[inline]{\Tx I wonder if using the word constrain both for the guarantee (previous sentence) and the rely (next sentence) wouldn't introduce confusion.}

Consequently, a Jones quintuple is valid if the component $S$ constrained by the rely satisfies the Hoare triple---the relationship between precondition and postcondition---and the guarantee includes all behaviours of $S$~\cite{Hoa11}: \begin{equation}{\label{eq:rgspec}}
  \triple{P\ R}{S}{G\ Q} \Leftrightarrow \triple{P}{R\| S}{Q} \wedge S\le G.
\end{equation}
The rules of Hoare logic without the assignment axiom are still derivable from the axioms of bi-Kleene algebra, when Hoare triples are replaced by Jones quintuples~\cite{Hoa11}. To derive the standard rely-guarantee concurrency rule, one can expand bi-Kleene algebra by a meet operation $(\sqcap)$ and assume that $(K,+,\sqcap)$ forms a distributive lattice~\cite{Arm14}. Then
\begin{equation}\label{rule:rg-standard}
\frac{\triple{P\ R}{S}{G\ Q}\quad\triple{P\ R'}{S'}{G'\ Q'}\quad G\leq R'\quad G'\leq R}{\triple{P\ R{\sqcap} R'}{S\|S'}{G{+} G'\ Q{\sqcap} Q',}}. 
\end{equation}
This inference rule demonstrates how the rely-guarantee specifications of components can be composed into a rely-guarantee specification of a larger system. If $S$ and $S'$ satisfy the premises, then $S\| S'$ satisfies both postconditions $Q$ and $Q'$ when run in an environment satisfying both relies $R$ and $R'$. Moreover, $S\| S'$ guarantees either of $G$ or $G'$.

Deriving these inference rules from the algebraic axioms mentioned makes them sound with respect to all models of these axioms, including trace-based semantics with parallel composition interpreted as interleaving, and true-concurrency semantics such as pomset languages and the event structures considered in this article. Without the algebraic layer, Dingel~\cite{Din02} and Coleman and Jones~\cite{Col07} have already proved the soundness of rely-guarantee rules with respect to trace-based semantics, more precisely \emph{Aczel traces}~\cite{Roe97}.  This paper follows previous algebraic developments, but for probabilistic programs.

In Section~\ref{sec:prgc}, we provide a suitable extension of the rely-guarantee formalism, in particular Rule~(\ref{rule:rg-standard}), to probabilistic concurrent programs. The soundness of such a formalism is shown relative to a semantic space that allows sequential probabilistic programs to include concurrent behaviours.

\section{Sequential probabilistic programs}\label{sec:sequential-programs}
We start by giving a brief summary of the denotation of sequential probabilistic programs using the powerdomain construction of McIver and Morgan~\cite{Mci04}. All probabilistic programs are considered to have a finite state space denoted by $\Omega$. A distribution over the set $\Omega$ is a function $\mu{:}\Omega{\to}[0,1]$ such that $\sum_{s{\in}\Omega}\mu(s) {=} 1$. The set of distributions over $\Omega$ is denoted by $\D\Omega$. Since $\Omega$ is a finite set, we identify a distribution with the associated measure. For every $\mu{\in}\D\Omega$ and $O{\subseteq}\Omega$, we write $\mu(O) {=} \sum_{s{\in} O}\mu(s)$. An example of distribution is the point distribution $\delta_s$, centred at the state $s{\in}\Omega$, such that
\begin{displaymath}
\delta_s(s') = \begin{cases}
1 & \textrm{ if } s {=} s',\\
0 & \textrm{otherwise.}
\end{cases}
\end{displaymath}
A (nondeterministic) probabilistic program $r$ modelled as a map of type $\Omega{\to}\P\D \Omega$ such that $r(s)$ is a non-empty, topologically closed and convex subset of $\D \Omega$ for every state $s{\in} \Omega$. The set $\D\Omega$ is a topological sub-space of the finite product $\Real^\Omega$ (endowed with the usual product topology), and the topological closure is considered with respect to the induced topology on $\Omega$\footnote{These healthiness conditions are set out and fully explained in the work of McIver and Morgan~\cite{Mci04}.}. We denote by $\H_1 \Omega $ the set of probabilistic programs that terminate almost certainly. Notice that the set $\D\Omega$ contains only distributions instead of the subdistributions considered by McIver and Morgan~\cite{Mci04}. Therefore, we only model nondeterministic programs that are terminating with probability $1$.

Programs in $\H_1 \Omega$ are ordered by pointwise inclusion, i.e. $r\refbyh r'$ if for every $s{\in} \Omega$, $r(s)\subseteq r(s')$. A program $r$ is deterministic if, for every $s$, $r(s) = \{\mu_s\}$ (i.e. a singleton) for some distribution $\mu_s{\in} \D \Omega$. The set of deterministic programs is denoted by $\J_1 \Omega$ (as in Jones' spaces~\cite{Jon92}). If $f{\in}\J_1\Omega$ is a deterministic program such that $f(s) = \{\mu_s\}$, then we usually just write $f(s) = \mu_s$. A particularly useful example of a probabilistic deterministic program is the ineffectual program $\mathtt{skip}$, which we denote by $\delta$. Thus $\delta(s) = \{\delta_s\}$.


Let $p\in[0,1]$. The probabilistic combination of two probabilistic programs $r$ and $r'$ is defined as~(\cite[\Defs{5.4.5}]{Mci04}) 
\begin{equation}\label{def:prog-probabilistic-choice}
(r\pc{p} r')(s) = \{\mu\pc{p}\mu'\ |\ \mu{\in} r(s)\wedge\mu'{\in} r'(s)\},
\end{equation}
where $(\mu \pc{p}\mu')(s) = (1{-}p)\mu(s) {+} p\mu'(s)$ for every state $s{\in}\Omega$. Thus, the program $r$ (resp. $r'$) is executed  with probability $1{-}p$ (resp. $p$).

Nondeterminism is obtained as the set of all probabilistic choices (\cite[\Defs{5.4.6}]{Mci04} ), that is, \begin{equation}\label{def:prog-nondeterminism} (r{+}r')(s) = {\cup}_{p{\in}[0,1]} (r\pc{p}r')(s).  \end{equation}

The sequential composition of $r$ by $r'$ is defined as (\cite[\Defs{5.4.7}]{Mci04}):
\begin{equation}\label{eq:6-sequential-H}
(r{\cdot} r')(s) =  \left\lbrace\left. f{\convol} \mu\right| f{\in}\J_1\Omega\wedge \mu{\in} r(s)\wedge f\refbyh r' \right\rbrace
\end{equation}
where 
\[
	(f{\convol}\mu)(s') = \sum_{s''{\in}\Omega}f(s'')(s')\mu(s'')
\]
for every state $s'{\in}\Omega$. 

%
%


For $r,r'{\in}\H_1\Omega$, the binary Kleene star $r{\bks} r'$ is the least fixed point of the function $f_{r,r'}(X) = r' {+} r{\cdot} X$ in $\H_1\Omega$. It has been shown in~\cite{Mci04} that the function $r'\mapsto r{\cdot} r'$ is continuous ---it preserves directed suprema. Notice that a topological closure is sometimes needed to ensure that we obtain an element of $\H_1\Omega$. Hence, the Kleene star $r{\bks} r'$ is the program such that
$r{\bks}r'(s) = \overline{{\cup}_{n}f_{r,r'}^n(\bot)(s)}$,
where $\overline{A}$ is the topological closure of the set $A\subseteq\D\Omega$ and the constant $\bot$ is defined, as usual, such that $r''{\cdot} \bot {=} \bot{\cdot} r'' {=} \bot$, $\bot{+}r'' = r''$ and $\bot\refbyh r''$ for every $r''{\in}\H_1\Omega{\cup}\{\bot\}$.

We introduce tests, which are used for conditional constructs, following the idea adopted in various algebras of programs. We define a test to be a map $b:\Omega\to\P\D\Omega$ such that $b(s) \subseteq\{\delta_s\}$. Indeed, an ``if statement" is modelled algebraically as $b{\cdot} r {+} (\neg b){\cdot} r'$ where $(\neg b)(s) = \emptyset$ if the test underlying $b$ holds at state $s$ and it is $\{\delta_s\}$ otherwise. The sub-expression $b{\cdot} r(s)$ still evaluates to $\emptyset$ if $b(s)$ is empty, but care should be taken to avoid expressions such as $r{\cdot} b$ (if $f$ is a deterministic refinement of $b$, then $f(s'')(s')$ may have no meaning if $b(s'') {=} \emptyset$). A test that is always false can be identified with $\bot$.

We denote by $\Hip\Omega$ the set of tests together with the set of probabilistic programs. The refinement order $\refbyh$ is extended to $\Hip\Omega$ in a straightforward manner. For every test $b$, we have $b\refbyh \delta$; hence, we refer to tests as \emph{subidentities}. Every elements of $\Hip\Omega$ are called programs, unless otherwise specified.



\section{An event structures model for probabilistic concurrent programs}\label{sec:es}

The set $\H_1\Omega$ of probabilistic programs provides a full semantics for program constructs such as (probabilistic) assignments, probabilistic choices, conditionals and while loops that terminate almost surely. Unfortunately, it is impossible to define the concurrent composition of two sequential programs as an operation on $\H_1\Omega$ because the result would always be a sequential program. Thus we are forced to look for a more general framework in order to formally model concurrency. Fortunately, there are several suitable mathematical models that allows the formal verification of programs with concurrent behaviours. A powerful example that accounts for true concurrency are Winskel's \emph{event structures}~\cite{Win80,Win86}. In this section, we outline a denotational semantics for probabilistic concurrent programs based on Langerak's bundle event structures~\cite{Lan92}, which have been extended successfully to quantitative features~\cite{Kat93,Kat96,Var03}. This construction is necessary to ensure the soundness of the extended rely-guarantee formalism.

{\Tx
A bundle event structure comprises events ranging over some set $E$ of \emph{events} as its fundamental objects. Intuitively, an event is an occurrence of an action at a certain moment in time. Thus an action can be repeated, but each of its occurrences is associated with a unique event. Events are (partially) ordered by a causality relation which we denote by $\mapsto$: if an event $e''$ causally depends on either $e$ or $e'$ (i.e. $\{e,e'\}{\mapsto} e''$) then either $e$ or $e'$ must have happened before $e''$ can happen or is \emph{enabled}. The relationship between $e$ and $e'$ is called \emph{conflict}, written $e\#e'$, because both events cannot occur simultaneously.
}

In general, the conflict relation $\#$ is a binary relation on $E$. Given two subsets $x,x'\subseteq E$, the predicate $x\#x'$ holds iff for every $(e,e'){\in} x{\times} x'$ such that $e{\neq}e'$, we have $e\#e'$. 

\begin{definition}\label{def:ipbes}
A quintuple $\EE = (E,\mapsto,\#,\lambda,\exit)$ is a \emph{bundle event structure with internal probability} (i.e. an ipBES) if 
\begin{itemize}
\item $\#$ is an irreflexive symmetric binary relation on $E$, called \emph{conflict relation}. 
\item $\mapsto\subseteq\P E{\times} E$ is a \emph{bundle relation}, i.e. if $x{\mapsto} e$ for some $x\subseteq E$ and $e{\in} E$, then $x\#x$. 
\item $\lambda{:}E{\to}\Hip\Omega$, i.e. it labels events with (atomic) probabilistic programs.
\item $\exit\subseteq \P E$ such that $x\#x$ holds  for every $x{\in}\exit$.
\end{itemize}
 
The finite state space $\Omega$ of the programs used as labels is fixed.
\end{definition} 

The intuition behind this definition is that events are occurrences of atomic program fragments, i.e. they can happen without interferences from an environment. Hence, we need to distinguish all atomic program fragments when translating a program into a bundle event structure. Atomic programs can be achieved by creating a construct that forces atomicity. Examples of such a technique include ``atomic brackets"~\cite{Jon12}. In this paper, we always state which actions are atomic rather than using such a device.

Given an ipBES $\EE$, a \emph{finite trace} of $\EE$ is a sequence of events $e_1e_2\dots e_n$ such that for all different $1\leq i,j\leq n$, $\neg (e_i\#e_j)$ and if $j {=} i{+}1$ then there exists an $x\subseteq E$ such that $x{\mapsto} e_j$ and $e_i{\in} x$~\cite{Lan92,Kat96,Rab13}. In other words, a trace is safe (an event may occur only when it is enabled) and is conflict free. The set of all finite traces of $\EE$ is denoted by $\TT(\EE)$. The set of maximal traces of $\EE$ (w.r.t the prefix ordering) is denoted $\TT_{\max}(\EE)$. We simply write $\TT$ (resp. $\TT_{\max}$) instead of $\TT(\EE)$ (resp. $\TT_{\max}(\EE)$) when no confusion may arise.

The aim of this section is to elaborate two relationships between the sets of traces of given event structures. The first comparison is based on a sequential reduction using schedulers; the second one is simulation. We will show that the sequential comparison is strictly weaker than the simulation relation.

\subsection{Schedulers on ipBES}\label{sec:scheduler}

{\Tx
As in the case of automata, we define schedulers on ipBES in order to obtain a sequential equivalence on bundle event structures with internal probability. Intuitively, a scheduler reduces an ipBES to a element of $\Hip\Omega$. While the technicalities of the schedulers we define in this paper is tailored towards a rely-guarantee reasoning, there might be relationships with previous works~\cite{Geo10,Geo12} where schedulers (and associated testing theories) are restricted in order achieve a broader class observationally equivalent processes. 
}

A \emph{subdistribution} is a map $\mu:\Omega\to[0,1]$ such that $\sum_{s{\in}\Omega}\mu(s)\leq 1$.  The set of subdistributions over $\Omega$ is denoted by $\Jip\Omega$.  

\begin{definition}\label{def:ipscheduler}
A \emph{scheduler} $\sigma$ on an ipBES $\EE$ is a map 
\[
	\sigma{:}\TT{\to} [(E{\times} \Omega){\pto} \Jip\Omega]
\] 
such that for all $\alpha{\in}\TT$:
\begin{enumerate}
\item $\dom(\sigma(\alpha)) = \{(e,s)\ |\ \alpha e{\in}\TT\wedge s{\in}\Omega\}$,\label{pr:sched-dom}
\item there exists a function $w{:}E{\times}\Omega{\to}[0,1]$ such that, for every $(e,s){\in}\dom(\sigma(\alpha))$, $\sigma(\alpha)(e,s) = w(e,s)\mu$ for some $\mu{\in}\lambda(e)(s)$.
\label{pr:sched-choice}
\item for every $ s{\in}\Omega$, we have $\sum_{(e,s){\in}\dom(\sigma(\alpha))} w(e,s) = 1$,\label{pr:sched-prob}
\item for every $(e,s){\in}\dom(\sigma(\alpha))$, if $\lambda(e)(s) {=} \emptyset$, then  $w(e,s) {=} 0$ and $\sigma(\alpha)(e,s) {=} 0$ (the subdistribution that evaluates to $0$ everywhere).\label{pr:sched-consistent}
\end{enumerate}
The set of all schedulers on $\EE$ is denoted by $\sched(\EE)$.
\end{definition}

Property~\ref{pr:sched-dom} says that we may schedule an event provided it does not depend on unscheduled events.

Property~\ref{pr:sched-choice} states that, given a trace $\alpha$, the scheduler will resolve the nondeterminism between events enabled after $\alpha$ byusing the weight function $w$. This may include immediate conflicts or interleavings of concurrent events. Moreover, the scheduler has access to the current program state when resolving that nondeterminism. This means that $w(e,s)$ is the probability that the event $e$ is scheduled, knowing that the program state is $s$. If the event $e$ is successfully scheduled, then the scheduler performs a last choice of distribution, say $\mu$ from $\lambda(e)(s)$, to generate the next state of the program.

Property~\ref{pr:sched-prob} ensures that when the state $s$ is known, then the choice between the events, enabled after the trace $\alpha$, is indeed probabilistic.  

Property~\ref{pr:sched-consistent} says that a scheduler is forced to choose events whose labels do not evaluate to the empty set at the current state of the program. This is particularly important when the program contains conditionals and the label of an event is a test. A scheduler is forced to choose the branch whose test holds. If two tests hold at state $s$, then a branch is chosen probabilistically using the weight function $w$. 

The motivation behind Property~\ref{pr:sched-consistent} is to ensure that, for every trace $\alpha$ such that $\dom(\sigma(\alpha)){\neq}\emptyset$, and every state $s{\in}\Omega$, we have
\[
	\sum_{(e,s){\in} \dom(\sigma(\alpha))} \sigma(\alpha)(e,s) {\in}\D\Omega,
\]
hence that sum is indeed a distribution. To ensure that a scheduler satisfying that condition can be constructed, we restrict ourselves to \textit{feasible} event structures. Given an element $r{\in}\Hip\Omega$, we write $\dom(r) {=} \{s \ |\ r(s){\neq}\emptyset\}$.

\begin{definition}
An ipBES $\EE$ is \emph{feasible} if for every trace $\alpha {\in}\TT{\setminus}\TT_{\max}$, we have ${\cup}_{\alpha e{\in}\TT}\dom(\lambda(e)) {=} \Omega$.
\end{definition}

A consequence of this assumption is that an ``if clause" always needs to have a corresponding ``else clause".

\begin{example}
Let us consider the program $r{\cdot}(\unity{+}r)$. In this program, $r$ is  atomic deterministic (such as an assignment to a variable) and the associated event structure has three events:
\[
	\EE = (\{e_{r},e_{r}',e_\delta\},\{\{e_{r}\}{\mapsto} e_\delta,\{e_{r}\}{\mapsto} e_{r}'\},\{e_\delta\#e_{r_2}\},\{(e_r,r),(e_r',r),(e_\delta,\delta)\},\exit),
\]
where $\exit = \{\{e_r',e_\delta\}\}$ (see \Sec{S:seqred} for an inductive construction of ipBES from primitive blocks). This event structure is feasible and a scheduler $\sigma$ on $\EE$ is characterised by a weight function $w{:}\{e_r,e_r',e_\delta\}{\times}\Omega{\to}[0,1]$ resolving the choice $\unity {+} r$.  In fact, for every fixed state $s{\in}\Omega$, we have $\sigma(e_r)(e_\delta,s){=}w(e_\delta,s)\delta_s$ and $\sigma(e_r)(e_r',s){=}w(e_r',s)r(s)$ and $w(e_\delta,s) {+} w(e_r',s) {=} 1$.
\end{example}

\subsection{Generating sequential probabilistic programs from ipBES and schedulers}

Similar to the case of probabilistic automata~\cite{Seg94}, our scheduler resolves branching as encoded in the conflict relation of an event structure. In addition, a scheduler also ``flattens" concurrency into interleaving by choosing an enabled event according to the associated weight function. The flattening of concurrent behaviours is sound because actions labelling events are assumed atomic and we are using schedulers to generate sequential behaviours from an ipBES. True concurrency is accounted for in \Sec{s1511}.

Let $\sigma{\in}\sched(\EE)$ and $s{\in}\Omega$ be an initial state. We inductively construct a sequence of functions $\varphi_n$ that map a trace in $\TT$ to a subdistribution on $\Omega$ according to $\sigma$ and $s$. Intuitively, if $\alpha{\in}\TT$, then $\varphi_n(\alpha){\in}\Jip\Omega$ is the sequential composition of the $n$-first probabilistic actions labelling events in $\alpha$ applied to the initial state $s$. This yields a subdistribution because $\alpha$ is weighted with respect to the scheduler $\sigma$. The sequence of partial functions $\varphi_n{:}\TT{\pto}\Jip\Omega$ is the \emph{computation sequence} of $\EE$ with respect to $\sigma$  from initial state $s$.

Formally, for each $n{\in}\Nat$, we have $\dom(\varphi_n) {=} {\cup}_{k\leq n}\TT_k$, where $\TT_n$ is the set of traces of length $n$ and 

\begin{enumerate}
\item $\varphi_0(\emptyset) {=} \delta_s$,where $s$ is the initial state,
\item if $\alpha e{\in}\TT_{n{+}1}$ then 
\[
	\varphi_{n{+}1}(\alpha e)(s) = \sum_{t{\in}\Omega}[\sigma(\alpha)(e,t)(s)]\varphi_n(\alpha)(t)
\]
and  $\varphi_{n{+}1}(\alpha e) {=} \varphi_n(\alpha e)$ otherwise.\label{pr:induction-computation-function}
\end{enumerate}

To emphasises that this computation function refers to a specific initial state $t{\in}\Omega$ we sometimes write $\varphi_{n,t}$ instead of $\varphi_n$.

The \textit{complete run} of $\EE$ with respect to $\sigma$ is the limit $\varphi$ of that sequence, i.e. $\varphi {=} {\cup}_n\varphi_n$, which exists because $\varphi_n$ defines a sequence of partial functions such that $\varphi_{n}$ is the restriction of $\varphi_{n{+}1}$ to $\dom(\varphi_n)$. Since we consider finite traces only, we have $\dom(\varphi) {=} \TT$. The \textit{sequential behaviour} of $\EE$ with respect to $\sigma$ from the initial state $s$ is defined by the sum \[
	\sigma_s(\EE) = \sum_{\alpha{\in}\TT_{\max}}\varphi(\alpha).
\]

\begin{proposition}\label{pro:scheduler-well-defined}
For every bundle event structure $\EE$, scheduler $\sigma{\in}\sched(\EE)$ and initial state $s$, $\sigma_s(\EE)$ is a subdistribution. 
\end{proposition}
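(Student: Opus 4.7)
The plan is to prove a mass-conservation invariant along the computation sequence $(\varphi_n)_n$ and then pass to the limit. For each $n$, partition $\dom(\varphi_n) = \bigcup_{k \leq n} \TT_k$ into $A_n = \TT_{\max} \cap \dom(\varphi_n)$, the maximal traces already reached, and $B_n = \TT_n \setminus \TT_{\max}$, the length-$n$ traces still open to extension. Set $M_n = \sum_{\alpha \in A_n} \varphi_n(\alpha) + \sum_{\alpha \in B_n} \varphi_n(\alpha)$, viewed as a function $\Omega \to \Real$. The central claim, to be proved by induction on $n$, is that $\sum_{s' \in \Omega} M_n(s') = 1$.

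The base case is immediate since $\varphi_0(\emptyset) = \delta_s$ has total mass $1$. For the inductive step, every trace in $A_n$ carries through to $A_{n+1}$ with unchanged $\varphi$-value, while each non-maximal $\alpha \in B_n$ has its mass redistributed over its length-$(n{+}1)$ extensions $\alpha e$. The key calculation is
\[
    \sum_{e\,:\,\alpha e \in \TT_{n+1}} \sum_{s' \in \Omega} \varphi_{n+1}(\alpha e)(s')
    \;=\; \sum_{t \in \Omega} \varphi_n(\alpha)(t) \sum_{e} \sum_{s' \in \Omega} \sigma(\alpha)(e,t)(s')
    \;=\; \sum_{t \in \Omega} \varphi_n(\alpha)(t),
\]
which uses Property~\ref{pr:sched-choice} (so $\sum_{s'}\sigma(\alpha)(e,t)(s') = w(e,t)\sum_{s'}\mu(s') = w(e,t)$ whenever $\lambda(e)(t)$ is non-empty, because $\mu$ is then a full distribution), Property~\ref{pr:sched-consistent} (to zero out pairs with empty $\lambda(e)(t)$), and Property~\ref{pr:sched-prob} to collapse $\sum_e w(e,t) = 1$. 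Summing this identity over $\alpha \in B_n$ and adjoining the unchanged $A_n$-contribution completes the induction.

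The proposition then follows by monotonicity: the partial sums $\sum_{\alpha \in A_n} \varphi_n(\alpha)$ are pointwise non-decreasing in $n$, agree with $\varphi$ on each trace in their domain, and are pointwise bounded above by $M_n$, whose total mass is $1$. Taking $n \to \infty$ gives $\sum_{s' \in \Omega} \sigma_s(\EE)(s') = \sum_{s'} \sum_{\alpha \in \TT_{\max}} \varphi(\alpha)(s') \leq 1$, so $\sigma_s(\EE)$ is a subdistribution. The main obstacle I anticipate is the bookkeeping around Definition~\ref{def:ipscheduler}(\ref{pr:sched-dom}): every pair $(e,s)$ with $\alpha e \in \TT$ lies in the scheduler's domain regardless of whether $\lambda(e)(s)$ is empty, so Property~\ref{pr:sched-consistent} must be invoked before collapsing the $w(e,t)$-sum, lest the identity $\sum_{s'}\sigma(\alpha)(e,t)(s') = w(e,t)$ be misapplied at a point where no witnessing distribution $\mu$ exists. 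The gap between the invariant (equality to $1$) and the conclusion (an inequality) is exactly the mass carried by non-maximal traces in $B_n$ in the limit, accounting for behaviours of $\EE$ that do not terminate along a finite maximal trace.
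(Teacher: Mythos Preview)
Your proposal is correct and follows essentially the same approach as the paper: the paper proves the same invariant $\sum_{\alpha\in\TT_n\cup(\TT_{\max}\cap\dom(\varphi_n))}\varphi(\alpha)(\Omega)=1$ by induction on $n$ (your $A_n\cup B_n$ is exactly this index set, and your $M_n$ is the paper's $\mu_n$), and then passes to the limit. Your treatment is, if anything, slightly more careful in explicitly invoking Property~\ref{pr:sched-consistent} for the empty-$\lambda(e)(t)$ case before collapsing $\sum_e w(e,t)=1$; the paper's proof cites only Properties~\ref{pr:sched-choice} and~\ref{pr:sched-prob} at that step.
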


\begin{proofsummary}
The proof is by induction on the sequence of computation functions. We can show by induction on $n$ that 
\[
	\mu_{n}(\Omega) = \sum_{\alpha{\in} \TT_{n}{\cup}(\TT_{\max}{\cap} \dom(\varphi_n))}\varphi(\alpha)(\Omega) = \sum_{t{\in}\Omega} \sum_{\alpha{\in} \TT_{n}{\cup}(\TT_{\max}{\cap} \dom(\varphi_n))}\varphi(\alpha)(t) = 1
\]
and deduce that, at the limit, $\sum_{\alpha{\in}\TT_{\max}}\varphi(\alpha)(\Omega) \leq 1$.
\end{proofsummary}

\begin{proof}
Let $\varphi$ be the complete run of $\EE$ with respect to a given scheduler $\sigma$. We show by induction on $n$ that
\[
	\mu_{n}(\Omega) = \sum_{\alpha{\in} \TT_{n}{\cup}(\TT_{\max}{\cap} \dom(\varphi_n))}\varphi(\alpha)(\Omega) = \sum_{t{\in}\Omega} \sum_{\alpha{\in} \TT_{n}{\cup}(\TT_{\max}{\cap} \dom(\varphi_n))}\varphi(\alpha)(t) = 1.
\]
For the base case $n=0$, we have $\mu_0(\Omega) = \varphi(\emptyset)(\Omega) = \delta_s(\Omega) = 1$, where $s$ is the initial state. Assume the induction hypothesis $\mu_n(\Omega) = 1$. We have
\begin{align*}
\mu_{n{+}1}(\Omega) & =   \sum_{\alpha{\in} \TT_{n{+}1}{\cup}(\TT_{\max}{\cap} \dom(\varphi_{n{+}1}))}\varphi(\alpha)(\Omega)\\
& =^{\dag} \sum_{\alpha{\in} \TT_{n{+}1}}\varphi(\alpha)(\Omega) {+} \sum_{\alpha{\in}\TT_{\max}{\cap} \dom(\varphi_n)}\varphi(\alpha)(\Omega)\\
& = \sum_{\alpha e{\in} \TT_{n{+}1}}\sum_{t{\in}\Omega}\sigma(\alpha)(e,t)(\Omega)\varphi(\alpha)(t) {+} \sum_{\alpha{\in} \TT_{\max}{\cap} \dom(\varphi_n)}\varphi(\alpha)(\Omega)\\
& =  \sum_{\alpha{\in} \TT_{n}{\setminus}\TT_{\max}}\sum_{\alpha e{\in} \TT}\sum_{t{\in}\Omega}\sigma(\alpha)(e,t)(\Omega)\varphi(\alpha)(t) {+} \sum_{\alpha{\in} \TT_{\max}{\cap} \dom(\varphi_n)}\varphi(\alpha)(\Omega)\\
& = \sum_{\alpha{\in} \TT_{n}{\setminus}\TT_{\max}}\left[\sum_{(e,t){\in}\dom(\sigma(\alpha))}\sigma(\alpha)(e,t)(\Omega)\right]\varphi(\alpha)(t) {+} \sum_{\alpha{\in}\TT_{\max}{\cap} \dom(\varphi_n)}\varphi(\alpha)(\Omega)\\
& =^{\ddag} \sum_{\alpha{\in} \TT_{n}{\setminus}\TT_{\max}}\varphi(\alpha)(\Omega) {+} \sum_{\TT_{\max}{\cap} \dom(\varphi_n)}\varphi(\alpha)(\Omega)\\
& =  \mu_n(\Omega) = 1.
\end{align*} 
($\dag$) Follows from $\TT_{n{+}1}{\cup}(\TT_{\max}{\cap} \dom(\varphi_{n{+}1})) = \TT_{n{+}1}{\cup}(\TT_{\max}{\cap} \dom(\varphi_{n}))$ and the fact that the second union is disjoint.

($\ddag$) The square-bracketed term equals $1$ because of Properties~\ref{pr:sched-choice} and~\ref{pr:sched-prob} of the scheduler $\sigma$.

Therefore, each partial computation $\varphi_n$ can be seen as a probability distribution $\varphi_n(-)(\Omega)$ supported on $\TT_n{\cup}(\TT_{\max}{\cap}\dom(\varphi_n))$. Hence, the limit is a subdistribution $\varphi(-)(\Omega)$ on $\TT_{\max}$. It does not necessarily add up to $1$ because elements of $\TT_{\max}$ are finite maximal traces only and non-termination will decrease that quantity (we assume that the empty sum is $0$. This occurs when there are no maximal traces).
\end{proof}

Given a state $t{\in}\Omega$, $\sigma_s(\EE)(t)$ is the probability that the concurrent probabilistic program denoted by $\EE$ terminates in state $t$ when conflicts (resp. concurrent events) are resolved (resp. interleaved) according to the scheduler $\sigma$. Since we consider terminating programs only, we denote by $\sched_1(\EE)$ the set of schedulers of $\EE$ such that, for every initial state $s$, $\sigma_s(\EE)$ is a distribution. A scheduler in $\sched_1(\EE)$  generates a sequential behaviour that terminates almost surely. This leads to our definition of a bracket $\sem{\ }$ that transform each feasible ipBES to an element of $\H_1\Omega$: \[
	\sem{\EE}(s) = \overline{\conv\{\sigma_s(\EE)\ |\ \sigma{\in}\sched_1(\EE)\}}
\]
where $\conv(A)$ (resp. $\overline{A}$) is the convex (resp. topological) closure of the set of distributions $A$ in $\Real^\Omega$. 

\begin{definition}\label{def:semantics-sequential}
Let $\EE,\FF$ be two feasible event structures. We say that $\EE$ (sequentially) refines $\FF$, denoted by $\EE\refby\FF$, if $\sem{\EE}\refbyh\sem{\FF}$ holds in $\H_1\Omega$.
\end{definition}

The relation $\refby$ is a preorder on ipBES. Whilst this order is not a congruence, it is used to specify the desired sequential properties of a feasible event structure $\EE$ with $\sched_1(\EE){\neq}\emptyset$. We will show that feasibility and non-emptiness of $\sched_1$ are preserved by the regular operations of the next section (Props \ref{pro:homomorphism} and \ref{pro:*-homomorphism}).

\subsection{Regular operations on ipBES}\label{S:seqred}

This section provides interpretations of the operations $(+,\cdot,\bks,\|)$ and constants $0,1$ on event structures with disjoint sets of events. These definitions allow the inductive translation of program texts into event structure objects.

\begin{itemize}
\item[-] The algebraic constant $1$ is interpreted as $({e},\emptyset,\emptyset,\{(e,\delta)\},\{e\})$.
\item[-] The algebraic constant $0$ is interpreted as $(\emptyset,\emptyset,\emptyset,\emptyset,\emptyset)$.
\item[-] Each atomic action $r{\in}\Hip\Omega$ is associated with $(\{e\},\emptyset,\emptyset,\{(e,r)\},\{e\})$. This event structure is again denoted by $r$.
\item[-] The nondeterministic choice between the event structures $\EE$ and $\FF$ is constructed as 
\[
	\EE {+} \FF = (E{\cup} F,\#_{\EE{+}\FF},\mapsto_\EE{\cup}\mapsto_\FF,\lambda_\EE{\cup}\lambda_\FF,\{x{\cup} y\ |\ x{\in}\exit_\EE\wedge y{\in}\exit_\FF\})
\]
where $\#_{\EE{+}\FF} = [{\cup}_{x{\in}\exit_\EE\wedge y\exit_\FF}\textrm{sym}(x{\times} y)]{\cup} \#_\EE{\cup}\#_\FF{\cup}\textrm{sym}({\init(\EE){\times}\init(\FF)})$ and $\textrm{sym}$ is the symmetric closure of a relation on $E{\cup} F$. The square-bracketed set ensures that every final event in $\EE$ is in conflict with every final event in $\FF$. This ensures that, if $z{\in}\exit_{\EE{+}\FF}$, then $z\#z$.

\item[-] The sequential composition of $\EE$ by $\FF$ is 
\[
	\EE{\cdot}\FF = (E{\cup} F,\#_\EE{\cup}\#_\FF,\mapsto_\EE{\cup}\mapsto_\FF{\cup}\{x\mapsto e\ |\ e{\in}\init(\FF)\wedge x{\in}\exit_\EE \},\lambda_\EE{\cup}\lambda_\FF,\exit_\FF).
\]
\item[-] The concurrent composition of $\EE$ and $\FF$ is
\[
\EE\|\FF = (E{\cup} F,\#_\EE{\cup}\#_\FF,\mapsto_{\EE}{\cup} \mapsto_\FF,\lambda_\EE{\cup}\lambda_\FF,\exit_\EE{\cup}\exit_\FF).  \] 
\item[-] The binary Kleene star of $\EE$ and $\FF$ is the supremum of the sequence \[
	\FF, \FF {+} \EE{\seq}\FF, \FF {+} \EE{\seq}(\FF {+} \EE\seq\FF),\dots
\]
of bundle event structures with respect to the $\omega$-complete sub-BES order~\cite{Rab13b}.
\end{itemize}

\todo[inline]{\Tx The reviewer advised that this example should show the inductive construction in practice.}

\begin{example}
Let us consider the sequential programs $r,\delta{\in}\H_1\Omega$. A concurrent program that is skipping or running $r$ in parallel with itself is algebraically denoted by $(r\|r){+}1$. The construction of the associated event structure starts from the innermost operation $(r\|r)$, assuming that each occurrence of the atomic action $r$ is associated with an event from $\{e_r,e_r'\}$. Thus 
\[
	\EE_{r\|r} = (\{e_r,e_r'\}, \emptyset, \emptyset,\underbrace{\{(e_r,r),(e_r',r)\}}_{\lambda_{r\|r}}, \{\{e_r\},\{e_r'\}\}).
\]
We can now construct the nondeterministic choice between $r\|r$ and $\delta$ as
\begin{footnotesize}
\[
	\EE_{(r\|r){+}1} = (\{e_r,e_r',e_\delta\}, \{e_r\#e_\delta,e_r'\#e_\delta\},\emptyset,\lambda_{r\|r}{\cup}\{(e_\delta,\delta)\}, \{\{\epsilon,e_\delta\}\ |\ \epsilon{\in}\{e_r,e_r'\}\}).
\]
\end{footnotesize}
In this example, we have $e_r\#e_\delta$ and $e_r'\#e_\delta$ but $e_r$ and $e_r'$ are concurrent.
\end{example}

For every bundle event structure $\EE$, $0 {+} \EE {=} \EE$, $0{\cdot} \EE {=} \EE{\cdot} 0 {=} \EE$, and in particular, $0{\cdot} 1 {=} 1$. The constant $0$ was only introduced to have a bottom element on the set of bundle event structures with internal probabilities. It ensures that we can compute the Kleene star inductively from the least element. Moreover, $0$ will disappear in mixed expressions because of these properties.

We now show that the operations $(+)$ and $(\cdot)$ are preserved by the map $\sem{\ }$. The case of the binary Kleene star $({\bks})$ is proven in \Prop{pro:*-homomorphism}.

\begin{proposition}\label{pro:homomorphism}
For $\EE,\FF$ non-zero, feasible and terminating event structures, we have $\sem{\EE{+}\FF} = \sem{\EE}{+}\sem{\FF}$ and $\sem{\EE{\cdot}\FF} = \sem{\EE}{\cdot}\sem{\FF}$.
\end{proposition}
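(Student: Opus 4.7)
The plan is to prove both identities by set-theoretic double inclusion, exploiting the fact that traces of $\EE{+}\FF$ (resp.\ $\EE{\cdot}\FF$) decompose cleanly along the structure of the construction, so that schedulers on the composite can be factored into schedulers on the pieces and vice versa. Convex and topological closure will then allow us to match the probabilistic choice and sequential composition operations in $\H_1\Omega$ exactly.

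For the sum, the key structural fact is that the added conflicts $\mathrm{sym}(\init(\EE){\times}\init(\FF))$ force every trace in $\TT(\EE{+}\FF)$ to lie entirely in $\TT(\EE)$ or entirely in $\TT(\FF)$, and similarly for maximal traces. A scheduler $\sigma{\in}\sched_1(\EE{+}\FF)$ therefore splits at the empty trace: let $p(s){=}\sum_{e{\in}\init(\FF)} w(e,s)$ be the total weight given to initial events of $\FF$ at state $s$; the remaining $1{-}p(s)$ is assigned to $\init(\EE)$. Restricting $\sigma$ to traces prefixed by $\init(\EE)$ (and renormalising) gives a scheduler $\sigma_\EE{\in}\sched_1(\EE)$, and similarly for $\sigma_\FF$. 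A direct computation on the computation sequences yields $\sigma_s(\EE{+}\FF){=}\sigma_{\EE,s}(\EE)\pc{p(s)}\sigma_{\FF,s}(\FF)$, which by \Eqn{def:prog-probabilistic-choice} lies in $\sem{\EE}(s){+}\sem{\FF}(s)$ after closure. Conversely, every pair $(\sigma_\EE,\sigma_\FF)$ together with a weight $p$ recombines into a scheduler on $\EE{+}\FF$, so convex-topological closure of the resulting set recovers $\sem{\EE}{+}\sem{\FF}$ as defined in \Eqn{def:prog-nondeterminism}.

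For the sequential composition, the new bundles $x{\mapsto} e$ (for $x{\in}\exit_\EE$, $e{\in}\init(\FF)$) imply that every maximal trace of $\EE{\cdot}\FF$ is the concatenation $\alpha\beta$ of a maximal trace $\alpha{\in}\TT_{\max}(\EE)$ with a maximal trace $\beta{\in}\TT_{\max}(\FF)$, and that $\FF$-events are only scheduled after $\alpha$ is complete. A scheduler $\sigma{\in}\sched_1(\EE{\cdot}\FF)$ restricts to a scheduler $\rho$ on $\EE$ (via $\rho(\alpha)(e,s){=}\sigma(\alpha)(e,s)$ for $\alpha,\alpha e{\in}\TT(\EE)$) and, for each maximal $\alpha{\in}\TT_{\max}(\EE)$, to a residual scheduler $\tau_\alpha$ on $\FF$ via $\tau_\alpha(\beta){=}\sigma(\alpha\beta)$. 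A straightforward induction on the computation sequence shows
\[
\sigma_s(\EE{\cdot}\FF)(s') \;=\; \sum_{t{\in}\Omega}\rho_s(\EE)(t)\cdot \Big(\sum_{\alpha{\in}\TT_{\max}(\EE)}\tfrac{\varphi^\EE(\alpha)(t)}{\rho_s(\EE)(t)}\,\tau_{\alpha,t}(\FF)(s')\Big),
\]
so that, setting $f(t)$ to the inner bracketed convex combination of elements of $\sem{\FF}(t)$, we obtain $\sigma_s(\EE{\cdot}\FF){=}f{\convol}\rho_s(\EE)$ with $\rho_s(\EE){\in}\sem{\EE}(s)$ and $f{\refbyh}\sem{\FF}$. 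Since every element of $\J_1\Omega$ refining $\sem{\FF}$ can be obtained as an appropriate convex/topological limit of such choices, and any pair $(\rho,(\tau_\alpha))$ recombines into a scheduler in $\sched_1(\EE{\cdot}\FF)$, double inclusion after closure yields $\sem{\EE{\cdot}\FF}{=}\sem{\EE}{\cdot}\sem{\FF}$ by \Eqn{eq:6-sequential-H}.

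The main obstacle is not the algebraic manipulation but the handling of the closures together with the fact that deterministic refinements $f{\in}\J_1\Omega$ in the sequential-composition law must be matched by schedulers whose post-$\EE$ behaviour is allowed to depend on the full history trace $\alpha$, not only on the terminating state $t$. The convex closure in the definition of $\sem{\cdot}$ absorbs this discrepancy, but a careful argument is required to show that approximating deterministic $f{\refbyh}\sem{\FF}$ by mixtures of $\tau_{\alpha,t}(\FF)$ produces a limit in $\sem{\EE{\cdot}\FF}(s)$; this is where the topological closure and finiteness of $\Omega$ are used in an essential way, much as in McIver and Morgan's original treatment of sequential composition in $\H_1\Omega$.
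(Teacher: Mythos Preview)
Your approach is essentially the paper's: for $(+)$ you split a scheduler on $\EE{+}\FF$ at the empty trace into normalised schedulers on $\EE$ and $\FF$ and read off a convex combination; for $(\cdot)$ you factor a scheduler on $\EE{\cdot}\FF$ into an $\EE$-scheduler and residual $\FF$-schedulers, then use the convolution form of sequential composition in $\H_1\Omega$. The closure arguments for the converse inclusions are handled the same way in both proofs.

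There is one genuine difference worth noting. In the sequential case the paper asserts that a scheduler $\sigma$ on $\EE{\cdot}\FF$ decomposes as a \emph{single} $\sigma^\FF{\in}\sched(\FF)$ with $\sigma(\alpha'\alpha'')(e,t){=}\sigma^\FF(\alpha'')(e,t)$, independent of the maximal $\EE$-prefix $\alpha'$. Strictly speaking this need not hold, since a scheduler may depend on the full history. Your use of a family $(\tau_\alpha)_{\alpha\in\TT_{\max}(\EE)}$ is the correct decomposition, and the convex combination
\[
f(t)\;=\;\sum_{\alpha\in\TT_{\max}(\EE)}\frac{\varphi^\EE(\alpha)(t)}{\rho_s(\EE)(t)}\,\tau_{\alpha,t}(\FF)
\]
lands in $\sem{\FF}(t)$ precisely because $\sem{\FF}(t)$ is convex. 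So your treatment is slightly more rigorous here; the paper's version is recoverable from yours by observing that the convex closure absorbs the dependence on $\alpha'$ after the fact. What the paper's shortcut buys is a cleaner induction on the computation sequence (no mixing over $\alpha$), at the cost of an imprecise existence claim; what your version buys is that the ``main obstacle'' you flag---matching deterministic $f{\refbyh}\sem{\FF}$ against history-dependent residual schedulers---is already visible in the forward direction, so the two inclusions are treated symmetrically.
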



\begin{proof}
For the case of nondeterminism $(+)$, let $s{\in}\Omega$ be the initial state and $\mu{\in}\sem{\EE{+}\FF}(s)$. Let us firstly assume that $\mu {=} \sigma_s(\EE)$ for some $\sigma{\in}\sched_1(\EE{+}\FF)$. By definition of the sum $\EE{+}\FF$, the set of events $E$ and $F$ are disjoints, so we can define two schedulers $\sigma^\EE{\in}\sched_1(\EE)$ and $\sigma^\FF{\in}\sched_1(\FF)$ as follows. Let $\alpha{\in}\TT(\EE{+}\FF)$ and $(e,t){\in}\dom(\sigma(\alpha))$, we define

\begin{displaymath}
\sigma^\EE(\alpha)(e,t) = \begin{cases}
\sigma(\alpha)(e,t) & \textrm{if }\alpha{\in}\TT(\EE){\setminus}\{\emptyset\},\\
\frac{\sigma(\emptyset)(e,t)}{p^\EE_t} & \textrm{if } \alpha {=} \emptyset.
\end{cases}
\end{displaymath} 
where $p^\EE_t {=} \sum_{e'{\in}\init(\EE)}w(e,t)$, $w$ is the weight function associated to $\sigma$ at the trace $\emptyset$ and $s$ is the initial state. The real number $p^\EE_t$ is just a normalisation constant required by Property~\ref{pr:sched-prob} in the definition of schedulers.~\footnote{If $p^\EE_t {=} 0$, then $\sigma{\in}\sched_1(\FF)$.} The scheduler $\sigma^\FF$ is similarly defined. It follows directly from these definition of $\sigma^\EE$ and $\sigma^\FF$ that $\sigma(\emptyset)(e,t) {=} p^\EE_t\sigma(\emptyset)(e,t) {+} p^{\FF}_t\sigma(\emptyset)(e,t)$ where $p^\EE_t {+} p^\FF_t {=} 1$ because of Property~\ref{pr:sched-prob}. Hence, $\sigma_s(\EE) {=} p^\EE_s\sigma_s^\EE(\EE) {+} p^{\FF}_s\sigma^{\FF}_s(\FF)$ i.e. $\sigma_s(\EE){\in}\sem{\EE} {+} \sem{\FF}$. Since $\sem{\EE} {+} \sem{\FF}$ is convex and topologically closed, we deduce that $\sem{\EE {+} \FF}(s)\subseteq(\sem{\EE}{+}\sem{\FF})(s)$. 

For the converse inclusion $(\sem{\EE}{+}\sem{\FF})(s){\subseteq}\sem{\EE {+} \FF}(s)$, notice that $\overline{\conv(A)} = \conv(\overline{A})$ holds for every subset $A{\subseteq}\Real^\Omega$. If we write $A {=} \{\sigma_s(\EE) \ | \ \sigma{\in}\sched_1(\EE)\}$ and $B {=} \{\sigma_s(\FF) \ | \ \sigma{\in}\sched_1(\FF)\}$, then 
\[
	(\sem{\EE}{+}\sem{\FF})(s) = \overline{\conv(\overline{\conv(A)}{\cup}\overline{\conv(B)})} = \overline{\conv(A{\cup} B)}.
\]
But it is clear that $A{\subseteq}\sem{\EE{+}\FF}(s)$ (a scheduler that does not choose $\FF$ is possible because $\EE$ is feasible) and $B{\subseteq}\sem{\EE{+}\FF}(s)$. Therefore, $(\sem{\EE}{+}\sem{\FF})(s) = \overline{\conv(A{\cup} B)}{\subseteq} \sem{\EE{+}\FF}(s)$ because the last set is convex and topologically closed.

The sequential composition is proven using a similar reasoning. Let $\EE,\FF$ be two bundle event structures satisfying the hypothesis, and $\mu{\in}\sem{\EE{\cdot}\FF}(s)$ for some initial state $s{\in}\Omega$. 

The proof of $\sem{\EE{\cdot}\FF}(s)\subseteq \sem{\EE}{\cdot}\sem{\FF}(s)$ goes as follows. Firstly, let us assume that there is a scheduler $\sigma$ on $\EE{\cdot}\FF$ such that $\mu {=} \sigma_s(\EE{\cdot}\FF)$. Since schedulers are inductively constructed, there exists $\sigma^\EE{\in}\sched(\EE)$ and $\sigma^\FF{\in}\sched(\FF)$ such that 
\begin{displaymath}
\sigma(\alpha)(e,t) = 
\begin{cases}
\sigma^\EE(\alpha)(e,t) & \textrm{if }\alpha e{\in}\TT(\EE),\\
\sigma^\FF(\alpha'')(e,t) & \textrm{if } \alpha {=} \alpha'\alpha''\textrm{ and }(\alpha',\alpha''){\in} \TT_{\max}(\EE){\times}\TT(\FF).
\end{cases}
\end{displaymath}
Let us denote by $\varphi_n$ and $\varphi^\EE_n$ (resp. $\varphi^\FF_{n,t})\footnote{Remind that $\varphi_{n,t}$ is the computation function computed given the initial state $t$.}$ the computation sequences associated to the respective schedulers $\sigma$ and $\sigma^\EE$ (resp. $\sigma^\FF$) from the initial state $s$ (resp. $t$). It follows directly that $\varphi_n(\alpha) {=} \varphi_n^\EE(\alpha)$ for every $\alpha{\in}\TT_n(\EE)$. If $\alpha'{\in}\TT_{\max}(\EE){\cap}\TT_n(\EE)$ and $e{\in}\init(\FF)$ then, for every state $u{\in}\Omega$,
\[
	\varphi_{n{+}1}(\alpha' e)(u) = \sum_{t{\in}\Omega}\sigma^\FF(\emptyset)(e,t)(u)\varphi^\EE(\alpha')(t).
\]
Similarly, we have
\begin{align*}
\varphi_{n{+}1}(\alpha' ee')(u) & = \sum_{t'{\in}\Omega}\sigma^\FF(e)(e,t')(u)\left[\sum_{t{\in}\Omega}\sigma^\FF(\emptyset)(e,t)(t')\varphi^\EE(\alpha')(t)\right]\\
& = \sum_{t{\in}\Omega}\left[\sum_{t'{\in}\Omega}\sigma^\FF(e)(e,t')(u)\sigma^\FF(\emptyset)(e,t)(t')\right]\varphi^\EE(\alpha')(t) \\
& = \sum_{t{\in}\Omega} \varphi_{2,t}^\FF(u)\varphi^\EE(\alpha')(t).
\end{align*}

By simple induction on the length of $\alpha''$, we deduce that
\[
	\varphi(\alpha'\alpha'')(u) = \sum_{t{\in}\Omega}\varphi_{t}^\FF(\alpha'')(u)\varphi^\EE(\alpha')(t),
\]
where $\varphi^\FF_t$ is the complete run obtained from the sequence $\varphi^\FF_{n,t}$. 
It follows by definition of the sequential composition on $\H_1\Omega$ (\Eqn{eq:6-sequential-H}) that
\[
	\sigma_s(\EE)(u) = \sum_{t{\in}\Omega}\sigma^{\FF}_t(\FF)(u)\sigma^{\EE}_s(\EE)(t){\in} \sem{\EE}{\cdot}\sem{\FF}(s)
\]
for every state $u{\in}\Omega$. Secondly, since $\sem{\EE}{\cdot}\sem{\FF}(s)$ is upclosed and topologically closed, we deduce that $\sem{\EE{\cdot}\FF}(s)\subseteq \sem{\EE}{\cdot}\sem{\FF}(s)$. 

Conversely, if $\mu{\in} \sem{\EE}{\cdot}\sem{\FF}(s)$, then either $\mu(u) = \sum_{t{\in} \Omega}\sigma^{\FF}_t(\FF)(u)\sigma^{\EE}_s(\EE)(t)$ or $\mu$ is in the closure of the set of these distributions. Either way, the closure properties of $\sem{\EE{\cdot}\FF}(s)$ implies that $\sem{\EE}{\cdot}\sem{\FF}(s){\subseteq}\sem{\EE{\cdot}\FF}(s)$. 
\end{proof}

\subsection{Simulation for ipBES}\label{s1511}

The partial order defined in Definition~\ref{def:semantics-sequential} compares the sequential behaviours of two systems. However, it suffers from a congruence problem, i.e. there exist programs $\EE,\FF$ and $\GG$ such that $\EE{\refby}\FF$ but $\EE\|\GG {\not\refby}\FF\|\GG$. A known technique for achieving congruence is to construct an order based on simulations, which is the subject of this section. We use a similar technique in this subsection.

We say that a trace $\alpha$ is \emph{weakly maximal} if it is maximal or there exist some events $e_1,\dots,e_n$ such that $\alpha e_1\cdots e_n{\in}\TT_{\max}$ and $\unity{\refbyh}\lambda(e_i)$ for every $1\leq i\leq n$. 

\begin{definition}\label{def:t-simulation}
A function $f{:}\TT(\EE){\to}\TT(\FF)$ is called a \emph{t-simulation} if the following conditions hold:
\begin{itemize}
\item[-] if $f(\emptyset) = \emptyset$ and $f^{-1}(\beta)$ is a finite set for every $\beta{\in}\TT(\FF)$,
\item[-] if $\alpha e{\in}\TT(\EE)$ then either:
\begin{itemize}
\item $f(\alpha e) = f(\alpha)$ and $\lambda(e)\refbyh\delta$ holds in $\H_1\Omega$,
\item or there exists an event $e'$ of $\FF$ such that $\lambda(e){\refbyh}\lambda(e')$ and $f(\alpha e) {=} f(\alpha) e'$.
\end{itemize} 
\item[-] if $\alpha e$ is maximal in $\TT(\EE)$ then $f(\alpha e) = f(\alpha) e'$, for some $e'$ (with $\lambda(e)\refbyh\lambda(e')$), and $f(\alpha e)$ is weakly maximal in $\TT(\FF)$~\footnote{If $f(\alpha)$ is maximal then $\alpha$ is necessarily maximal.}.
\end{itemize}
We say that $\EE$ is simulated by $\FF$, written $\EE\simref\FF$, if there exists a simulation from $\EE$ to $\FF$. The equivalence generated by this preorder is denoted $\equiv_{\textrm{sim}}$.
\end{definition}

The notion of t-simulation has been designed to simulate event structures correctly in the presence of tests. For instance, given a test $b$, the simulation $\unity\simref(b{+}\neg b)$ fails because a t-simulation is a total function and it does not allow the removal of ``internal" events labelled with subidentities during a refinement step. The finiteness condition on $f^{-1}(\beta)$ ensures that we do not refine a terminating specification with a diverging implementation. Without that constraint, we would be able to write the refinement \[
	\mathtt{if}\ (0{=}1)\ \mathtt{then}\ s{:=}0\ \mathtt{ else } [\mathtt{if}\ (0{=}1)\ \mathtt{then}\ s{:=}0\ \mathtt{else} [\dots]]\simref s{:=}0.
\]
However, this should not hold because the left hand sides is a non-terminating program and cannot refine the terminating assignment ${s:=}0$.

A t-simulation is used to compare bundle event structures without looking in details at the labels of events. It can be seen as a refinement order on the higher level structure of a concurrent program.  Once a sequential behaviour has to be checked, we use the previously defined functional equivalence on event structures with internal probabilities.

\begin{example}\label{ex:t-sim}
Consider a program variable $x$ of type Boolean (with value $0$ or $1$). A t-simulation from $(x{=}1){+}(x{\neq} 1){\cdot} (x{:=}1)$ to $1 {+} (x{:=}0{\sqcap} 1)$~\footnote{$x{:=}0{\sqcap} 1$ is an atomic nondeterministic-assignment, it cannot be interfered with.} is given by the dotted arrow in the following diagram:
\begin{displaymath}
\xymatrix{
&\emptyset\ar[dl]\ar[dr]\ar@{.>}[rrrr]& & && \ar[dl]\ar[dr]\emptyset&\\
e_{x{=}1}\ar@{.>}@/_/[rrrr]& & e_{x{\neq}1}\ar[d]\ar@{.>}[urrr] &&e_\delta&&e_{x{:=}0{\sqcap} 1} \\
&&e_{x{\neq}1}e_{x{:=}1}\ar@{.>}[urrrr]&&&&
}
\end{displaymath}
This t-simulation refines two nondeterministic choices, one at the program structure level and the other at the atomic level.
\end{example}

\begin{proposition}
The t-simulation relation $\simref$ is a preorder.
\end{proposition}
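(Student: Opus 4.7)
The proposal is to verify the two preorder axioms separately.

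\textbf{Reflexivity.} I would show that the identity map $\mathrm{id}_\EE : \TT(\EE) \to \TT(\EE)$ is a t-simulation. The first clause is immediate: $\mathrm{id}_\EE(\emptyset)=\emptyset$ and each fibre is a singleton. For the second clause, given $\alpha e \in \TT(\EE)$, I pick the witness $e'=e$ in $\FF=\EE$; then $\lambda(e)\refbyh\lambda(e)$ by reflexivity of $\refbyh$ and $\mathrm{id}_\EE(\alpha e)=\mathrm{id}_\EE(\alpha)\,e$. The maximal-trace clause follows similarly because the identity sends maximal traces to themselves (which are trivially weakly maximal).

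\textbf{Transitivity.} Given t-simulations $f:\TT(\EE)\to\TT(\FF)$ and $g:\TT(\FF)\to\TT(\GG)$, I would prove that $g\circ f$ is a t-simulation. The empty-trace condition is immediate, and the finiteness of $(g\circ f)^{-1}(\gamma)=\bigcup_{\beta\in g^{-1}(\gamma)} f^{-1}(\beta)$ is a finite union of finite sets. For the step clause, I would do a case analysis at $\alpha e\in \TT(\EE)$ against the two alternatives for $f$, and then for each resulting trace in $\TT(\FF)$ against the two alternatives for $g$; the three non-trivially-combined cases all close under transitivity of $\refbyh$ (using $\lambda(e)\refbyh\lambda(e')\refbyh\delta$ in the mixed cases and $\lambda(e)\refbyh\lambda(e')\refbyh\lambda(e'')$ in the last one).

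\textbf{Main obstacle: the maximal-trace clause.} Suppose $\alpha e$ is maximal in $\TT(\EE)$. By the clause for $f$ we have $f(\alpha e)=f(\alpha)\,e'$ with $\lambda(e)\refbyh\lambda(e')$ and $f(\alpha e)$ weakly maximal in $\TT(\FF)$; pick events $e_1',\dots,e_n'$ with $\unity\refbyh\lambda(e_i')$ so that $f(\alpha)\,e'\,e_1'\cdots e_n'$ is actually maximal in $\TT(\FF)$. I would then apply $g$ along this extended trace, using the step clause on each interior segment and the maximal clause at the final event $e_n'$; this yields a trace $(g\circ f)(\alpha)\,e_1''\cdots e_k''$ that is weakly maximal in $\TT(\GG)$, where for every $j$ the witness satisfies either $\lambda(e)\refbyh\lambda(e_j'')$ (for the image of $e'$) or $\unity\refbyh\lambda(e_j'')$ (because $\unity\refbyh\lambda(e_i')\refbyh\lambda(e_j'')$ by transitivity). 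The delicate point is that $g$ may collapse the event $e'$ (the case where $\lambda(e')\refbyh\delta$ and $g(f(\alpha)\,e')=g(f(\alpha))$); in that situation I would argue that $(g\circ f)(\alpha e)=(g\circ f)(\alpha)$ still lies weakly maximal in $\TT(\GG)$ by chaining the two weak-maximality witnesses, and that $\lambda(e)\refbyh\delta$ transfers accordingly, so $g\circ f$ still conforms to the intended reading of the maximal clause. Establishing this chaining cleanly, and confirming that the definition of weak maximality is stable under composition of simulations, is the main work of the proof.
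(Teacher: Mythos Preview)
Your treatment of reflexivity and of the first two clauses in the transitivity argument is fine and matches the paper. The gap is in your handling of the maximal-trace clause for the composite.

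Read the third clause of Definition~\ref{def:t-simulation} again: it requires that a \emph{maximal} trace $\alpha e$ be sent to a trace of the form $h(\alpha)\,e'$ for some event $e'$; the image must grow by exactly one event. Hence in the delicate case you identify, where $\lambda(e')\refbyh\delta$ and $g(f(\alpha)e')=g(f(\alpha))$, the plain composite $g\circ f$ sends $\alpha e$ to $(g\circ f)(\alpha)$, and this \emph{fails} the clause outright --- weak maximality of the image is irrelevant, because the ``append an event'' requirement is violated. Arguing that this ``still conforms to the intended reading'' does not work against the definition as stated.

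The paper's fix is not to reinterpret the clause but to \emph{modify the composed map} at such maximal $\alpha e$. Since $f(\alpha)e'$ is weakly maximal in $\TT(\FF)$, its image $g(f(\alpha)e')=g(f(\alpha))$ is also weakly maximal in $\TT(\GG)$; so there is an event $e''\in G$ with $g(f(\alpha))e''\in\TT(\GG)$ weakly maximal and $\lambda(e')\refbyh\lambda(e'')$. One then \emph{redefines} the simulation from $\EE$ to $\GG$ so that $\alpha e\mapsto g(f(\alpha))e''$ at these points (and uses $g\circ f$ elsewhere). This appended $e''$ gives the required form $h(\alpha)e''$, with $\lambda(e)\refbyh\lambda(e')\refbyh\lambda(e'')$ by transitivity of $\refbyh$. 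So the witness for transitivity is not $g\circ f$ itself but a perturbation of it at the maximal traces where collapse occurs; your proposal is missing exactly this redefinition step.
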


\begin{proof}
Reflexivity follows from the identity function and transitivity is obtained by composing t-simulations which will generate a new t-simulation. Notice that care should be taken with respect to the third property of a t-simulation. If $f{:}\TT(\EE){\to}\TT(\FF)$, $g{:}\TT(\FF){\to}\TT(\GG)$ are t-simulations, $\alpha e{\in}\TT_{\max}(\EE)$ and $\lambda(e){\refbyh}\unity$, then $f(\alpha e) {=} f(\alpha) e'$ for some $e'$ of $\FF$ such that $\lambda(e){\refbyh}\lambda(e')$. If $\lambda(e'){\refbyh}\unity$, then it is possible that $g(f(\alpha)e') {=} g(f(\alpha))$. However, since $f(\alpha)e'$ is weakly maximal, $g(f(\alpha)e')$ is also weakly maximal and we can find an event $e''{\in} G$ such that $f(\alpha)e''$ is weakly maximal and $\lambda(e'){\refbyh}\lambda(e'')$. We then map $\alpha e$ to $g(f(\alpha))e''$ in the t-simulation from $\EE$ to $\FF$.
\end{proof}

\begin{proposition}\label{pro:necessary-axioms}
If $\EE,\FF,\GG$ are ipBES, then
\begin{align}
\EE \| \FF &\equiv_{\mathrm{sim}} \FF\|\EE,\label{eq:par-comm}\\
\EE \| (\FF \| \GG) &\equiv_{\mathrm{sim}} (\EE\|\FF)\|\GG,\label{eq:par-assoc}\\
\EE{\bks} \FF&\equiv_{\mathrm{sim}}\FF {+} \EE{\cdot} (\EE{\bks} \FF),\label{eq:unfold}\\
\EE\simref\FF & \Rightarrow\GG{+}\EE\simref\GG{+}\FF,\label{eq:+-monotony}\\
\EE\simref\FF & \Rightarrow\GG{\cdot}\EE\simref\GG{\cdot}\FF,\label{eq:monotony}\\
\EE\simref\FF & \Rightarrow \EE\|\GG\simref\FF\|\GG.\label{eq:congruence}
\end{align}
\end{proposition}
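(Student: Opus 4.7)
The plan is to dispatch the six items separately, with (\ref{eq:par-comm})--(\ref{eq:unfold}) being essentially structural observations about the definitions in Section~\ref{S:seqred}, and (\ref{eq:+-monotony})--(\ref{eq:congruence}) requiring an explicit construction that lifts a given t-simulation. For (\ref{eq:par-comm}) and (\ref{eq:par-assoc}), concurrent composition was defined by taking set-theoretic unions of events, conflicts, bundles, labels, and exits, and union is commutative and associative, so $\EE\|\FF$ and $\FF\|\EE$ (resp.\ $\EE\|(\FF\|\GG)$ and $(\EE\|\FF)\|\GG$) are identical as ipBES. The identity on traces is then trivially a t-simulation in either direction. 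For (\ref{eq:unfold}), by construction $\EE\bks\FF$ is the $\omega$-complete supremum of the chain $\FF,\ \FF{+}\EE{\cdot}\FF,\ \FF{+}\EE{\cdot}(\FF{+}\EE{\cdot}\FF),\dots$; unfolding the supremum by one step exhibits $\EE\bks\FF$ and $\FF{+}\EE{\cdot}(\EE\bks\FF)$ as isomorphic sub-BES, hence t-simulation equivalent.

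For the three monotonicity clauses, fix a t-simulation $f:\TT(\EE)\to\TT(\FF)$. In each case I lift $f$ to a t-simulation $\tilde f$ by leaving the adjoined component $\GG$ untouched and acting by $f$ on the $\EE$-contribution. For (\ref{eq:+-monotony}), the construction of $(+)$ makes $\init(\GG)$ conflict with $\init(\EE)$, so a non-empty trace of $\GG{+}\EE$ lies entirely in $\TT(\GG)$ or entirely in $\TT(\EE)$; set $\tilde f$ to the identity on $\TT(\GG)$-traces and to $f$ on $\TT(\EE)$-traces (mapping $\emptyset$ to $\emptyset$). For (\ref{eq:monotony}), the extra bundles $\exit_\GG\mapsto\init(\EE)$ force every trace of $\GG{\cdot}\EE$ to decompose uniquely as $\alpha\beta$ where $\alpha\in\TT(\GG)$ and $\beta$ is either empty or $\alpha\in\TT_{\max}(\GG)$ and $\beta\in\TT(\EE)\setminus\{\emptyset\}$; define $\tilde f(\alpha\beta)=\alpha f(\beta)$. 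For (\ref{eq:congruence}), every trace of $\EE\|\GG$ is an interleaving of some $\alpha\in\TT(\EE)$ with some $\beta\in\TT(\GG)$ (since the two components share no events, bundles, or conflicts), and we define $\tilde f$ to replace the $\EE$-sub-sequence pointwise by $f$ while keeping the $\GG$-events and the global order intact; this is well-defined because the disjointness of $E$ and $G$ lets us identify the $\EE$-projection unambiguously event by event.

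Verification of the three conditions of Definition~\ref{def:t-simulation} is then largely routine: $\tilde f(\emptyset)=\emptyset$ by construction; finiteness of preimages follows from finiteness of preimages of $f$ together with the bijectivity of $\tilde f$ on the $\GG$-contribution; and the per-event extension condition reduces to that of $f$ when the new event lies in $\EE$ and is immediate for $\GG$-events since the label (and hence $\refbyh$) is preserved by the identity. The main obstacle is the third clause — preservation of weak maximality — for (\ref{eq:congruence}): if $\alpha e$ is maximal in $\TT(\EE\|\GG)$ then $\exit_{\EE\|\GG}=\exit_\EE\cup\exit_\GG$ forces the $\EE$- and $\GG$-projections of $\alpha e$ to become maximal in their own components, and one has to argue that, after $f$ has padded the $\EE$-projection with possible $\delta$-refinable events to reach a weakly maximal trace of $\FF$, the resulting interleaving in $\FF\|\GG$ is still weakly maximal. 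This reduces to checking that the padding events of $f$ can always be appended to the interleaving without violating any bundle (they originate in $\FF$ and depend only on previously scheduled $\FF$-events), and to the analogous but simpler case for (\ref{eq:monotony}) at the boundary between $\GG$ and $\EE$.
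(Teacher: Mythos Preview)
Your proposal is correct and follows essentially the same approach as the paper: items (\ref{eq:par-comm})--(\ref{eq:unfold}) are disposed of by observing that the constructions yield literally the same (or isomorphic) event structures, and for (\ref{eq:+-monotony})--(\ref{eq:congruence}) you lift the given t-simulation $f$ by acting as the identity on the $\GG$-part and as $f$ on the $\EE$-part, exactly as the paper does (the paper writes $\mathrm{id}_{\TT(\GG)}\cup f$ for the sum, $\alpha\mapsto\alpha|_G\,f(\alpha|_E)$ for the sequential case, and gives the same inductive case split on $e\in G$ versus $e\in E$ for parallel). Your discussion of the weak-maximality clause for (\ref{eq:congruence}) is slightly more explicit than the paper's, but the argument is the same.
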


\begin{proofsummary}
The constructions $\EE\|\FF$ and $\FF\|\EE$ result in the same event structure and similarly for associativity. 

For the implication~(\ref{eq:congruence}), let $f{:}\TT(\EE){\to}\TT(\FF)$ be a t-simulation. We construct a t-simulation $g{:}\TT(\EE\|\GG){\to}\TT(\FF\|\GG)$ inductively. We set $g(\emptyset) {=} \emptyset$. Let $\alpha{\in} \TT(\EE\|\GG)$ and $e{\in} E{\cup} G$ such that $\alpha e$ is a trace of $\EE\|\GG$. We write $\alpha|_E$ for the restriction of $\alpha$ to the events occurring in $\EE$. The inductive definition of $g$ is:
\begin{displaymath}
g(\alpha e) = \begin{cases}
g(\alpha)e & \textrm{if } e{\in} G,\\
g(\alpha) & \textrm {if } e{\in} E\textrm{ and } f(\alpha|_Ee) {=} f(\alpha|_E), \\
g(\alpha)e' & \textrm{if } e{\in} E\textrm{ and } f(\alpha|_Ee) {=} f(\alpha|_E)e'
\end{cases}
\end{displaymath}
Since the set of events of $\EE$ and $\GG$ are disjoint, the cases in the above definition of $g$ are disjoint. That is, $g$ is indeed a function and it satisfies the second property of a t-simulation. The last property is clear because if $\alpha e$ is maximal in $\TT(\EE\|\GG)$, then either $\alpha|_E$ is maximal in $\EE$ and $\alpha|_Ge$ is maximal in $\TT(\GG)$, or $\alpha|_Ee$ is maximal in $\TT(\EE)$ and $\alpha|_G$ is maximal in $\TT(\GG)$. In both cases, $g(\alpha e) {=} g(\alpha)e'$ for some $e'{\in} E{\cup} G$ and $g(\alpha e)$ is weakly maximal in $\TT(\FF\|\GG)$.

Similarly, the other cases are shown by constructing t-simulations.
\end{proofsummary}

\begin{proof}
The constructions $\EE\|\FF$ and $\FF\|\EE$ result in the same event structure and similarly for the associativity. 


The Unfold \Eqn{eq:unfold} is clear because the left and right hand side event structures are exactly the same up to renaming of events.

Implication~(\ref{eq:+-monotony}) follows by considering the function $\mathrm{id}_{\TT(\GG)}{\cup} f{:}\TT(\GG{+}\EE){\to}\TT(\GG {+} \FF)$. It is indeed a function because the sets of events $G$ and $E$ (resp. $F$) are disjoint. The property of a t-simulation follows directly because the set of traces $\TT(\GG{+}\EE)$ is the disjoint union $\TT(\GG){\cup}\TT(\EE)$ (similarly for $\GG{+}\FF$).

For case of sequential composition~(\ref{eq:monotony}), let $f$ be a t-simulation from $\EE$ to $\FF$. It is clear that the function $g{:}\TT(\GG{\cdot}\EE){\to}\TT(\GG{\cdot}\FF)$, such that $g(\alpha) {=} \alpha|_Gf(\alpha|_E)$ is a t-simulation.

For the Implication~(\ref{eq:congruence}), let $f{:}\TT(\EE){\to}\TT(\FF)$ be a t-simulation. Let us construct a t-simulation $g{:}\TT(\EE\|\GG){\to}\TT(\FF\|\GG)$ inductively. We set $g(\emptyset) {=} \emptyset$. Let $\alpha{\in} \TT(\EE\|\GG)$ and $e{\in} E{\cup} G$ such that $\alpha e$ is a trace of $\EE\|\GG$. We write $\alpha|_E$ the restriction of $\alpha$ to the events occurring in $\EE$. The inductive definition of $g$ is:
\begin{displaymath}
g(\alpha e) = \left\lbrace
\begin{array}{cl}
g(\alpha)e & \textrm{if } e{\in} G,\\
g(\alpha) & \textrm {if } e{\in} E\textrm{ and } f(\alpha|_Ee) {=} f(\alpha|_E), \\
g(\alpha)e' & \textrm{if } e{\in} E\textrm{ and } f(\alpha|_Ee) {=} f(\alpha|_E)e'.
\end{array}\right.
\end{displaymath}
Since the set of events of $\EE$ and $\GG$ are disjoint, the cases in the above definition of $g$ are disjoint. That is, $g$ is indeed a function and it satisfies the second property of a t-simulation. The last property is clear because if $\alpha e$ is maximal in $\TT(\EE\|\GG)$, then either $\alpha|_E$ is maximal in $\EE$ and $\alpha|_Ge$ is maximal in $\TT(\GG)$, or $\alpha|_Ee$ is maximal in $\TT(\EE)$ and $\alpha|_G$ is maximal in $\TT(\GG)$. In both cases, $g(\alpha e) {=} g(\alpha)e'$ for some $e'{\in} E{\cup} G$ and $g(\alpha e)$ is weakly maximal in $\TT(\FF\|\GG)$.
\end{proof}

We now state the main result of this section, which is the backbone of our probabilistic rely-guarantee calculus.

\begin{theorem}\label{thm:trace-imply-distribution}
Let $\EE$ and $\FF$ be feasible and terminating ipBES. Then $\EE\simref\FF$ implies $\EE\refby\FF$.
\end{theorem}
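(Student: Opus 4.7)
The goal is to show $\sem{\EE}(s) \subseteq \sem{\FF}(s)$ for every initial state $s \in \Omega$. Since $\sem{\FF}(s)$ is convex and topologically closed, it suffices to show $\sigma_s(\EE) \in \sem{\FF}(s)$ for every $\sigma \in \sched_1(\EE)$. Fix such a $\sigma$ together with a t-simulation $f: \TT(\EE) \to \TT(\FF)$ witnessing $\EE \simref \FF$. The plan is to construct a matching scheduler $\sigma' \in \sched_1(\FF)$ with $\sigma'_s(\FF) = \sigma_s(\EE)$, from which the conclusion is immediate.

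The construction pushes the weights of $\sigma$ forward through $f$, inductively along traces of $\FF$. For each $\beta \in f(\TT(\EE))$ and each extension $\beta e' \in \TT(\FF)$, consider the finite set $A(\beta, e') = \{(\alpha, e) \mid f(\alpha) = \beta,\ f(\alpha e) = \beta e'\}$; finiteness follows from $f^{-1}(\beta)$ being finite. The weight $w'(e', s)$ and the selected subdistribution in $\sigma'(\beta)(e', s)$ are defined by aggregating the corresponding $w(e, s)$ and $\mu \in \lambda(e)(s)$ from $\sigma$ over $A(\beta, e')$, suitably normalised by the total $\sigma$-mass landing in the fibre $f^{-1}(\beta)$ at state $s$. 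The inequality $\lambda(e) \refbyh \lambda(e')$ together with the convexity of $\lambda(e')(s)$ ensures that the resulting subdistribution lies in $\lambda(e')(s)$, so Property~2 of Definition~\ref{def:ipscheduler} holds for $\sigma'$.

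Two subtleties concern subidentity events. Events $e$ with $f(\alpha e) = f(\alpha) = \beta$ are collapsed by $f$; since they are subidentity-labeled they leave the state distribution unchanged and merely redistribute mass among the preimages of $\beta$ within the $\EE$-side computation. Dually, when $\alpha \in \TT_{\max}(\EE)$ the image $f(\alpha)$ is only weakly maximal in $\FF$, and we complete it to a maximal trace of $\FF$ by deterministically scheduling the subidentity extensions guaranteed by Definition~\ref{def:t-simulation}; these again leave the state distribution unchanged. Combined with the finiteness of $f^{-1}(\beta)$ and the almost-sure termination of $\sigma$, this places $\sigma'$ in $\sched_1(\FF)$.

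The principal obstacle is a simultaneous induction on trace length establishing two invariants: the per-state normalisation (Property~3 of Definition~\ref{def:ipscheduler}) of $\sigma'$ at every $\beta$, and the computation identity $\varphi'(\beta) = \sum_{\alpha \in f^{-1}(\beta)} \varphi(\alpha)$, where $\varphi$ and $\varphi'$ are the complete runs induced by $\sigma$ and $\sigma'$ respectively. Both invariants rest on a careful accounting of how probability mass flows through the fibres of $f$ under the combined effect of probabilistic branching and subidentity collapses, and use essentially the convexity of each $\lambda(e')(s)$ to keep the aggregated selection inside the required set. Once these are in place, summing over $\TT_{\max}(\FF)$ yields $\sigma'_s(\FF) = \sigma_s(\EE) \in \sem{\FF}(s)$, and hence $\EE \refby \FF$.
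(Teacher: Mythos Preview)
Your proposal follows essentially the same strategy as the paper: given $\sigma\in\sched_1(\EE)$ and the t-simulation $f$, push the scheduling data forward along $f$ to build a matching $\sigma'\in\sched_1(\FF)$ with $\sigma'_s(\FF)=\sigma_s(\EE)$; use convexity of $\lambda(e')(s)$ together with $\lambda(e)\refbyh\lambda(e')$ for Property~2; extend weakly maximal images deterministically along subidentity events. All of this is exactly what the paper does.

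There is, however, one bookkeeping slip. Your stated invariant
\[
\varphi'(\beta)=\sum_{\alpha\in f^{-1}(\beta)}\varphi(\alpha)
\]
cannot hold as written: when a chain $\alpha,\alpha e_1,\alpha e_1e_2,\dots$ of subidentity-collapsed extensions all lie in the same fibre $f^{-1}(\beta)$, summing their $\varphi$-values overcounts (each $\varphi(\alpha e_1\cdots e_i)(t)=w_0(e_1,t)\cdots w_{i-1}(e_i,t)\,\varphi(\alpha)(t)$, so the sum over the chain is strictly larger than $\varphi(\alpha)(t)$). The same overcount contaminates your normalisation ``by the total $\sigma$-mass landing in the fibre $f^{-1}(\beta)$''. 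The paper's fix is to work with the set $f_{\min}^{-1}(\beta)$ of \emph{minimal} traces in each fibre; the correct invariant is $\psi(\beta)=\sum_{\alpha\in f_{\min}^{-1}(\beta)}\varphi(\alpha)$, and the normalising constant is $C_{\beta,t}=\sum_{\alpha\in f_{\min}^{-1}(\beta)}\varphi(\alpha)(t)$. Interestingly, your set $A(\beta,e')$ already computes the right \emph{numerator}: for $(\alpha,e)\in A(\beta,e')$ with $\alpha=\alpha_0 e_1\cdots e_{k-1}$ and $\alpha_0\in f_{\min}^{-1}(\beta)$, one has $\varphi(\alpha)(t)\,w(e,t)=\varphi(\alpha_0)(t)\prod_{i=1}^{k}w_{i-1}(e_i,t)$, matching the paper's construction. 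So only the denominator and the invariant need the ``minimal preimage'' correction; once that is made, your sketch coincides with the paper's proof.
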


\begin{proofsummary}
The proof amounts to showing that, given an initial state $s{\in}\Omega$ and a scheduler $\sigma$ of $\EE$, there exists a scheduler $\sigma'$ of $\FF$ that generates exactly the same distribution as $\sigma$ from the state $s$. The scheduler $\sigma'$ is  constructed  inductively from the t-simulation from $\EE$ to $\FF$. 
\end{proofsummary}

\begin{proof}
Let $f$ be a t-simulation from $\EE$ to $\FF$, $s{\in}\Omega$ be the initial state, $\sigma{\in}\sched_1(\EE)$ and $\varphi$ is the complete run of $\sigma$ on $\EE$ from $s$. We have to generate a scheduler $\tau{\in}\sched_1(\FF)$ such that the measures $\sigma_s(\EE)$ and  $\tau_s(\FF)$ are equal i.e. they produce the same value for every state $u{\in}\Omega$.

For every $\beta{\in}\TT(\FF)$, we define $f^{-1}_{\min}(\beta)$ to be the set of minimal traces in $f^{-1}(\beta)$, that is,
\[
	f_{\min}^{-1}(\beta) = \{\alpha\ |\ \forall e{\in} E:\alpha {=} \alpha' e{\in} f^{-1}(\beta)\Rightarrow \alpha' {\notin} f^{-1}(\beta)\}.
\]
We now construct the scheduler $\tau$. Let $\beta{\in}\TT(\FF)$. We consider two cases:
\begin{itemize}
\item If $f^{-1}(\beta) = \emptyset$ then we set $\tau(\beta)(e,t) = 0{\in}\Jip\Omega$, except for some particular maximal traces that are handled in $(\dagger)$ below. 
\item Otherwise, given a state $t{\in}\Omega$, we define a normalisation factor 
\[
	C_{\beta,t} = \sum_{\alpha{\in} f_{\min}^{-1}(\beta)}{\varphi}(\alpha)(t),
\]
and we set~\footnote{Notice if $C_{\beta,t} = 0$ for some $t{\in}\Omega$ then ${\varphi}(\alpha)(t) = 0$ for every $\alpha{\in} f_{\min}^{-1}(\beta)$. In other words, none of these $\alpha$ will be scheduled at all. Hence, $\beta$ need not be scheduled either.}
\[
	\tau(\beta)(e,t) = \frac{1}{C_{\beta,t}}\left(\sum_{\alpha{\in} f_{\min}^{-1}(\beta)}\varphi(\alpha)(t)\sum_{\alpha e_1\cdots e_k{\in} f_{\min}^{-1}(\beta e)}\prod_{i=1}^{k}w_{i-1}(e_i,t)\mu_k\right)
\]
where $w_{i-1}(e_i,t)$ is the weight function such that $\sigma(\alpha e_1\cdots e_{i-1})(e_i,t) = w_{i-1}(e_i)\mu$, and $\mu{\in}\lambda(e_{i})$ (if $\lambda(e_{i})(t)$ is empty then $w_{i-1}(e_i,t) = 0$). The distribution $\mu_k$ is chosen by $\sigma$ from $\lambda(e_k)(t)$, when scheduling $e_k$.
\end{itemize}

Firstly, we show that $\tau$ is indeed a scheduler on $\FF$. The Property(\ref{pr:sched-dom}) of Definition~\ref{def:ipscheduler} is clear. Let us show the other properties. Let $\beta e{\in}\TT(\EE)$ and
let $W{:}E{\times}\Omega{\to}\Real$ be the weight function such that 
\[
	W(e,t) = \frac{1}{C_{\beta,t}}\sum_{\alpha{\in} f_{\min}^{-1}(\beta)}\varphi(\alpha)(t)\sum_{\alpha e_1\cdots e_k{\in} f_{\min}^{-1}(\beta e)}\prod_{i=1}^{k}w_{i-1}(e_i,t).
\]
Indeed, $\mu {=} \frac{\tau(\beta)(e,t)}{W(e,t)}$ is in $\lambda(e)(t)$~\footnote{The case $W(e,t) = 0$ can be adapted easily because the numerator in the definition of $\tau(\beta)(e)$ is also $0$. For instance, we can assume that $\frac{0}{0} = 1$.} because $\lambda(e)(t)$ is convex and for each $\alpha e_1\cdots e_k{\in}f_{\min}^{-1}(\beta e)$ and $\mu_k{\in}\lambda(e_k)(t){\subseteq}\lambda(e)(t)$.
Hence $\tau(\beta)(e) = W_ef_e$ and $\tau$ satisfies the Property (\ref{pr:sched-choice}) of \Defs{\ref{def:ipscheduler}}. As for Property (\ref{pr:sched-prob}), let $s{\in}\Omega$ and let us compute the quantity 
\[
	V(t) = \sum_{(e,t){\in}\dom(\tau(\beta))}W(e,t),
\]
for a fixed $t{\in}\Omega$. Let us write $\dom(\beta) = \{e \ | \ \beta e{\in}\TT(\FF) \}$.
\begin{eqnarray}
V(t) & = & \sum_{(e,t){\in}\dom(\tau(\beta))}\frac{1}{C_{\beta,t}}\sum_{\alpha{\in} f_{\min}^{-1}(\beta)}\varphi(\alpha)(t)\sum_{\alpha e_1\cdots e_k{\in} f_{\min}^{-1}(\beta e)}\prod_{i=1}^{k}w_{i-1}(e_i,t)\nonumber\\
& = & \frac{1}{C_{\beta,t}}\sum_{\alpha{\in} f_{\min}^{-1}(\beta)}\varphi(\alpha)(t)\sum_{(e,t){\in}\dom(\tau(\beta))}\sum_{\alpha e_1\cdots e_k{\in} f_{\min}^{-1}(\beta e)}\prod_{i=1}^{k}w_{i-1}(e_i,t)\nonumber\\
 & = & \frac{1}{C_{\beta,t}}\sum_{\alpha{\in} f_{\min}^{-1}(\beta)}\varphi(\alpha)(t)\sum_{\alpha e_1\cdots e_k{\in} {\cup}_{e{\in}\dom(\beta)}f_{\min}^{-1}(\beta e)}\prod_{i=1}^{k}w_{i-1}(e_i,t).\nonumber
\end{eqnarray}
From the second to the third expression, the two rightmost sums were merged into a single one because $f^{-1}_{\min}(\beta e){\cap} f^{-1}(\beta e') = \emptyset$ ($f$ is a function). It follows from Property~(\ref{pr:sched-prob}), applied on the weight $w_{i-1}(e_i,t)$ of $\sigma$, that
\[
	\sum_{\alpha e_1\cdots e_k{\in} {\cup}_{e{\in}\dom(\beta)}f_{\min}^{-1}(\beta e)}\prod_{i=1}^{k}w_{i-1}(e_i,s) = 1
\]	
and hence $V = 1$ (c.f. Figure~\ref{fig:6-concrete-sched} for a concrete example). The last Property (\ref{pr:sched-consistent}) of \Defs{\ref{def:ipscheduler}} is clear because if $\lambda(e)(t) = \emptyset$, then the coefficient of $\sigma(\alpha e_1\cdots e_{k-1})(e_k,t)$ is $0$ because $\lambda(e_k)(t) = \emptyset$. Hence, the product is also $0$.

\begin{figure}
\begin{displaymath}
\xymatrix{
&\alpha\ar[dl]_{w_0(e'_1,t)}\ar[dr]^{w_0(e_1,t)}\ar@{.>}[rrrr]&&&&\beta\ar[d]^{W(e,t)}&\\
\alpha e_1'\ar@/_/@{.>}[rrrrr]&&\alpha e_1\ar[dl]_{w_1(e'_2,t)}\ar[dr]^{w_1(e_2,t)}\ar@/_/@{.>}[rurr]&&&e&\\
&\alpha e_1e_2'\ar@/_/@{.>}[urrrr]&&\alpha e_1e_2\ar@/_/@{.>}[urr]&&&
}
\end{displaymath}
We have $V(t) =w_0(e'_1,t) + w_0(e_1,t)w_1(e_2',t)w_0(e_1,t)w_1(e_2,t) = 1$ because $w_1(e_2',t)+w_1(e_2,t) = 1$ and $w_0(e'_1,t) + w_0(e_1,t)=1$ (\Defs{\ref{def:ipscheduler}} Property (\ref{pr:sched-prob})).
\caption{An example showing that $V(t) = 1$.}\label{fig:6-concrete-sched}
\end{figure}

Secondly, let $\psi$ be the complete run of $\FF$ with respect to $\tau$. We now show by induction on $\beta$ that 
\begin{equation}\label{eq:subgoal}
\psi(\beta) = \sum_{\alpha{\in} f_{\min}^{-1}(\beta)}\varphi(\alpha) = C_{\beta,t},
\end{equation}
where the empty sum evaluates to the identically zero distribution. The base case is clear because $\psi(\emptyset) {=} \delta_{s} {=} \phi(\emptyset)$ where $s$ is the initial state. Let us assume the above identity for $\beta{\in}\TT(\FF)$ and let $e{\in} F$ such that $\beta e {=} \TT(\EE)$ and $f_{\min}^{-1}(\beta e){\neq}\emptyset$. By definition of $\psi$, if $u{\in}\Omega$, we have:
\begin{footnotesize}
\begin{align*}
\psi(\beta e)(u) &= \sum_{t{\in}\Omega} \frac{1}{C_{\beta,t}}\sum_{\alpha{\in} f_{\min}^{-1}(\beta)}\varphi(\alpha)(t)\sum_{\alpha e_1\cdots e_k{\in} f_{\min}^{-1}(\beta e)}\prod_{i=1}^{k}w_{i-1}(e_i,t)\mu_k(u)\psi(\beta)(t)\\
& = \sum_{t{\in}\Omega}\sum_{\alpha{\in} f_{\min}^{-1}(\beta)}\sum_{\alpha e_1\cdots e_k{\in} f_{\min}^{-1}(\beta e)}\prod_{i=1}^{k}w_{i-1}(e_i,t)\mu_k(u)\varphi(\alpha)(t) \\
& = \sum_{\alpha{\in} f_{\min}^{-1}(\beta)}\sum_{\alpha e_1\cdots e_k{\in} f_{\min}^{-1}(\beta e)}\sum_{t{\in} \Omega}\prod_{i=1}^{k}w_{i-1}(e_i,t)\mu_k(u)\varphi(\alpha)(t)\\
& =  \sum_{\alpha{\in} f_{\min}^{-1}(\beta)}\sum_{\alpha e_1\cdots e_k{\in} f_{\min}^{-1}(\beta e)}\sum_{t{\in} \Omega} \sum_{t'{\in}\Omega} w_0(e_1,t')\delta_{t'}(t)\left[\prod_{i=2}^{k}w_{i-1}(e_i,t)\mu_k(u)\right]\varphi(\alpha)(t')\\
& =\sum_{\alpha{\in} f_{\min}^{-1}(\beta)}\sum_{\alpha e_1\cdots e_k{\in} f_{\min}^{-1}(\beta e)}\sum_{t{\in} \Omega}\prod_{i=2}^{k}w_{i-1}(e_i,t)\mu_k(u)\varphi(\alpha e_1)(t)\\
& =  \sum_{\alpha{\in} f_{\min}^{-1}(\beta)}\sum_{\alpha e_1\cdots e_k{\in} f_{\min}^{-1}(\beta e)}\sum_{t{\in} \Omega} \sum_{t'{\in}\Omega} w_1(e_2,t')\delta_{t'}(t)\left[\prod_{i=3}^{k}w_{i-1}(e_i,t)\mu_k(u)\right]\varphi(\alpha e_1)(t')\\
& = \cdots .
\end{align*}
\end{footnotesize}
By continuing the above reasoning for all $e_i$ (induction), $i\leq k-1$, we obtain 
\begin{align*}
\psi(\beta e)(u) & = \sum_{\alpha{\in} f_{\min}^{-1}(\beta)}\sum_{\alpha e_1\cdots e_k{\in} f_{\min}^{-1}(\beta e)}\sum_{t{\in} \Omega}w_{k-1}(e_k,t)\mu_k(u)\varphi(\alpha e_1\cdots e_{k-1})(t)\nonumber\\
& = \sum_{\alpha{\in} f_{\min}^{-1}(\beta)}\sum_{\alpha e_1\cdots e_k{\in} f_{\min}^{-1}(\beta e)}\varphi(\alpha e_1\cdots e_k)(u).\nonumber
\end{align*}
Hence, 
\[
	\psi(\beta e)(u) = \sum_{\alpha'{\in} f_{\min}^{-1}(\beta e)}\varphi(\alpha')(u).
\]
$(\dagger)$ We finally compute the sum $\tau_s(\FF) = \sum_{\beta{\in}\TT_{\max}(\FF)}\psi(\beta)$. Notice firstly that $\tau$ may not schedule some traces of $\FF$. In particular, the third property in the definition of simulation implies that a maximal element of $\TT(\EE)$ may be mapped to a weakly maximal element of $\TT(\FF)$. Hence, we need to extend the scheduler $\tau$ so that it is non-zero for exactly one maximal element from that weakly maximal trace. More precisely, if $\beta' = f(\alpha)$ is weakly maximal for some maximal trace $\alpha{\in}\TT_{\max}(\EE)$, then there exists a sequence $e_1,\dots,e_n$ such that $\beta = \beta' e_1\cdots e_n{\in}\TT_{\max}(\FF)$ and $\unity\refbyh\lambda(e_i)$. We extend $\tau$ such that $\tau(\beta' e_1\cdots e_i)(e_{i{+}1},t) = \delta_t$. This implies that $\psi(\beta)(t) = \psi(\beta')(t)$. The other case is that $\beta$ is maximal and belongs to the image of $f$. In both cases, we have 
\[
	\psi(\beta)(t) = \sum_{\alpha{\in} A_\beta}\varphi(\alpha)(t),
\]
where $A_\beta = f^{-1}_{\min}(\beta)$ if $\beta$ is in the image of $f$, or $A_\beta = f^{-1}_{\min}(\beta')$ if there is such a $\beta'$ as above, otherwise, $A_\beta = \emptyset$. Thus, $A_\beta$ contains maximal traces only (if it is not empty). Since, $f$ is a total function, the set $\{A_\beta\ |\ \beta{\in}\TT_{\max}(\FF)\}$ is a partition of $\TT_{\max}(\EE)$ and we have 
\[
	\sum_{\beta{\in}\TT_{\max}(\EE)}\psi(\beta)(t) = \sum_{\beta{\in}\TT_{\max}(\EE)}\sum_{\alpha{\in} A_\beta}\varphi(\alpha)(t) = \sum_{\alpha{\in}\TT_{\max}(\EE)}\varphi(\alpha)(t),
\]
i.e. we obtain $\tau_s(\FF) = \sigma_s(\EE)$.
\end{proof}

\begin{example}
Reconsider the t-simulation of Example~\ref{ex:t-sim}. By definition, the unique scheduler $\sigma$ on $(x{=}1) {+} (x{\neq}1){\cdot} (x{:=}1)$ satisfies:
\begin{itemize}
\item[-] $\sigma(\emptyset)(e_{x{=}t},s) = w(e_{x{=}t},s)\delta_t$ where
$w(e_{x{=}t},s) = \begin{cases}
1 & \textrm{if } s {=} t, \\
0 & \textrm{otherwise.}
\end{cases}$
\item[-] $\sigma(e_{x{=}0})(e_{x{:=}1},s) = \delta_1$, for $s{\in}\{0,1\}$.
\end{itemize}
The corresponding scheduler $\tau$ on $1{+}(x{:=}0{\sqcap}1)$, constructed (as per the proof of \Thm{thm:trace-imply-distribution}) from $\sigma$ using the illustrated t-simulation, satisfies:
\begin{itemize}
\item[-] $\sigma(\emptyset)(e_\delta,1) = w(e_{x{=}1},1)\delta_1 = \delta_1$ and $\sigma(\emptyset)(e_\delta,0) = 0$,
\item[-] $\sigma(\emptyset)(e_{x:=0{\sqcap}1},1)=0$ and $\sigma(\emptyset)(e_{x{:=}0{\sqcap}1},0) = w(e_{x{=}0},0)\delta_1 = \delta_1$.
\end{itemize}
Since $(x{=}1) {+} (x{\neq}1){\cdot} (x{:=}1)$ is sequentially equivalent to $x{:=}1$, we can see that the scheduler $\tau$ on $1{+}(x{:=}0{\sqcap}1)$ forces the final value of $x$ to be $1$ by resolving $(+)$ and $(\sqcap)$ as they were resolved in the program $(x{=}1) {+} (x{\neq}1){\cdot} (x{:=}1)$.
\end{example}

We now show that the binary Kleene star is preserved by the semantics map.

\begin{proposition}\label{pro:*-homomorphism}
For every non-zero, feasible and terminating event structure $\EE$ and $\FF$, we have $\sem{\EE{\bks}\FF} = \sem{\EE}{\bks}\sem{\FF}$.
\end{proposition}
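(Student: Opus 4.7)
The plan is to reduce the statement to the homomorphism properties established in \Prop{pro:homomorphism} combined with an $\omega$-continuity argument for the semantic map $\sem{-}$ along the defining chain of the Kleene star.

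First, I would define the approximants $\EE_0 {=} \FF$ and $\EE_{n{+}1} {=} \FF {+} \EE{\cdot}\EE_n$, so that $\EE{\bks}\FF {=} \sup_n \EE_n$ by construction. Setting $r {=} \sem{\EE}$, $r' {=} \sem{\FF}$, and $f(X) {=} r' {+} r{\cdot} X$, an induction on $n$ using \Prop{pro:homomorphism} gives $\sem{\EE_n} {=} f^{n{+}1}(\bot)$; the base case follows from $f(\bot) {=} r' {+} r{\cdot}\bot {=} r' {=} \sem{\EE_0}$. Recall that by definition $r{\bks}r'(s) {=} \overline{\bigcup_n f^n(\bot)(s)}$, so this reduces the task to relating $\sem{-}$ to the supremum of the chain $(\EE_n)_{n}$.

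Second, I would establish the continuity identity $\sem{\EE{\bks}\FF}(s) {=} \overline{\conv\bigcup_n \sem{\EE_n}(s)}$ for every initial state $s{\in}\Omega$. For the inclusion $\supseteq$, any $\sigma{\in}\sched_1(\EE_n)$ extends canonically to a scheduler $\hat\sigma{\in}\sched_1(\EE{\bks}\FF)$ by never activating events outside $\EE_n$, yielding $\hat\sigma_s(\EE{\bks}\FF) {=} \sigma_s(\EE_n)$, and the right hand side is convex and topologically closed. For $\subseteq$, take $\sigma{\in}\sched_1(\EE{\bks}\FF)$ with complete run $\varphi$. Since $\EE$ and $\FF$ are terminating and $\sigma$ belongs to $\sched_1$, \Prop{pro:scheduler-well-defined} ensures that $\sum_{\alpha{\in}\TT_{\max}}\varphi(\alpha)$ is a distribution, so truncating that sum to traces using at most $n$ copies of $\EE$ produces a sequence of subdistributions each realisable, after suitable renormalisation, by a scheduler in $\sched_1(\EE_n)$, whose limit is $\sigma_s(\EE{\bks}\FF)$.

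Third, since $\sem{\EE_n} {=} f^{n{+}1}(\bot)$ and $f^0(\bot) {=} \bot$, we get $\bigcup_n \sem{\EE_n}(s) {=} \bigcup_n f^n(\bot)(s)$, and this is already an increasing chain of convex sets, hence $\overline{\conv\bigcup_n \sem{\EE_n}(s)} {=} \overline{\bigcup_n f^n(\bot)(s)} {=} \sem{\EE}{\bks}\sem{\FF}(s)$. Combining with the continuity identity gives the desired equality. The main obstacle is the $\subseteq$ direction of the continuity step: one must argue that a terminating scheduler on the infinite event structure $\EE{\bks}\FF$ really can be approximated by schedulers on the finite approximants $\EE_n$, and that the truncated-and-renormalised schedulers satisfy Properties (\ref{pr:sched-choice})--(\ref{pr:sched-prob}) of \Defs{\ref{def:ipscheduler}}; this is where feasibility of $\EE,\FF$ and the termination hypothesis will be used decisively.
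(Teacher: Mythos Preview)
Your overall strategy---identify $\sem{\EE_n}$ with the finite iterates $f^{n+1}(\bot)$ via \Prop{pro:homomorphism} and then pass to the limit---matches the paper's. The two proofs diverge in how each inclusion is obtained.

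For the direction $\sem{\EE}{\bks}\sem{\FF}\refbyh\sem{\EE{\bks}\FF}$, the paper does not extend schedulers from $\EE_n$ to $\EE{\bks}\FF$ as you propose. Instead it observes, using the unfold equation $\FF{+}\EE{\cdot}(\EE{\bks}\FF)\equiv_{\mathrm{sim}}\EE{\bks}\FF$ together with \Thm{thm:trace-imply-distribution} and \Prop{pro:homomorphism}, that $\sem{\EE{\bks}\FF}$ is itself a fixed point of $f(X)=\sem{\FF}{+}\sem{\EE}{\cdot}X$ in $\H_1\Omega$; since $\sem{\EE}{\bks}\sem{\FF}$ is the least such fixed point, the inclusion follows immediately. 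Your scheduler-extension argument is also correct, but the fixed-point route is shorter and reuses the machinery already in place.

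For the harder direction $\sem{\EE{\bks}\FF}\refbyh\sem{\EE}{\bks}\sem{\FF}$, your ``truncate and renormalise'' step is where the proposal is underspecified. Globally rescaling the truncated subdistribution $\mu_n$ by $1/|\mu_n|$ does not obviously produce something of the form $\tau_s(\EE_n)$ for a scheduler $\tau$: schedulers make \emph{local} choices satisfying Property~(\ref{pr:sched-prob}) at every trace, and a uniform global rescaling does not respect that constraint. The paper sidesteps this entirely. It fixes once and for all an arbitrary $\sigma_0{\in}\sched_1(\FF)$ and defines $\sigma_n$ to agree with $\sigma$ on traces that lie within the first $n$ iterations, and to use $\sigma_0$ on the terminal copy of $\FF$ thereafter. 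This \emph{padding} (rather than renormalisation) yields a genuine scheduler $\sigma_n{\in}\sched_1(\EE_n)$ by construction, with $\sigma_{n,s}(\EE_n){\in}\sem{\EE}{\bks}\sem{\FF}(s)$; one then shows $\sigma_{n,s}(\EE_n)\to\sigma_s(\EE{\bks}\FF)$ pointwise because the discrepancy is supported on traces outside $\TT_{\leq n-1}$, whose total mass tends to zero. Topological closedness of $\sem{\EE}{\bks}\sem{\FF}(s)$ finishes the argument. Replacing your renormalisation by this padding-with-$\sigma_0$ construction is the missing concrete idea.
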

\begin{proofsummary}
  Note that $\sem{\EE}{\bks}\sem{\FF}$ is the least fixed point of the function $f(X) = \sem{\FF}{+} \sem{\EE}{\cdot} X$ in $\H_1\Omega$\footnote{Notice that the least fixed point is in $\H_1\Omega$ but not $\Hip\Omega$. The reason is that $\sem{\EE}$ and $\sem{\FF}$ are elements of $\H_1\Omega$ because of feasibility and termination.}, and $\EE{\bks}\FF$ satisfies $\FF {+} \EE{\cdot}(\EE{\bks}\FF) \equiv_{\textrm{sim}} \EE{\bks}\FF$ by construction of the sequences of bundle event structures defining $\EE{\bks}\FF$. Therefore, \Thm{thm:trace-imply-distribution} and \Prop{pro:homomorphism} imply that $\ \sem{\EE}{\bks}\sem{\FF}\refbyh \sem{\EE{\bks}\FF}$.

The converse refinement $\ \sem{\EE{\bks}\FF}\refbyh\sem{\EE}{\bks}\sem{\FF}$ holds because every scheduler of $\EE{\bks}\FF$ is the ``limit" of a sequence of schedulers used in the construction of $\sem{\EE}{\bks}\sem{\FF}$.
\end{proofsummary}

\begin{proof}
For the binary Kleene product, since $\sem{\EE}{\bks}\sem{\FF}$ is the least fixed point of $f(X) = \sem{\FF}{+} \sem{\EE}{\cdot} X$ in $\H_1\Omega$\footnote{Notice that the least fixed point is in $\H_1\Omega$ but not $\Hip\Omega$. The reason is that $\sem{\EE}$ and $\sem{\FF}$ are elements of $\H_1\Omega$ because of feasibility and termination.}, and $\EE{\bks}\FF$ satisfies 
\[
	\FF {+} \EE{\cdot}(\EE{\bks}\FF) \equiv_{\textrm{sim}} \EE{\bks}\FF
\]
by construction of the sequences of bundle event structures defining $\EE{\bks}\FF$. Therefore, \Thm{thm:trace-imply-distribution} and \Prop{pro:homomorphism} imply that $\ \sem{\EE}{\bks}\sem{\FF}\refbyh \sem{\EE{\bks}\FF}$. 

Conversely, let $\mu{\in} \sem{\EE{\bks}\FF}(s)$ for some initial state $s{\in}\Omega$. As in the case of \Prop{pro:homomorphism}, we assume that $\mu$ is computed from a scheduler $\sigma$ on $\EE{\bks}\FF$. We construct a sequence of schedulers $\sigma_n$ that ``converges" to $\sigma$ as follows. We set $\sigma_0$ to be any element of $\sched_1(\FF)$, $\sigma_1(\alpha) {=} \sigma(\alpha)$ if $\alpha$ is a trace of $\FF$ or $\EE$, otherwise, we set $\sigma_1(\alpha'\alpha'') {=} \sigma_0(\alpha'')$ where $\alpha'{\in}\TT_{\max}(\EE)$ (notice that $\sigma_0$ is applied to a different copy of $\FF$ but this is not important as event names can be abstracted.). Inductively, we define 
\begin{displaymath}
\sigma_{n}(\alpha) =\begin{cases}
\sigma(\alpha)&\textrm{if } \alpha{\in} \TT(\underbrace{\FF {+} \EE{\cdot}(\dots \EE{\cdot}(\FF {+} \EE))}_{n \textrm{ occurrences of }\EE}), \\
\sigma_{0}(\alpha|_F)&\textrm{otherwise.}
\end{cases}
\end{displaymath}  

Again, $\sigma_0$ is applied to the $n{+}1^{\textrm{th}}$ copy of $\FF$. Indeed, we have 
\[
	\sigma_n{\in}\sched_1(\underbrace{\FF {+} \EE{\cdot}(\cdots\EE{\cdot}(\FF {+} \EE{\cdot}\FF))}_{n\textrm{ occurrences of } \EE})
\] 
by construction. On the one hand, the sequence of distributions $\sigma_{n,s}(\EE)$ forms a subset of $\sem{\EE}{\bks}\sem{\FF}(s)$. On the other hand, let $u{\in}\Omega$ and let us denote 
\[
	\TT_{\leq n} = \TT(\underbrace{\FF {+} \EE{\cdot}(\dots \EE{\cdot}(\FF {+} \EE{\cdot}\FF))}_{n \textrm{ occurrences of }\EE}).
\]	
If we denote by $\varphi_n$ the complete run of $\sigma_n$ on $\EE{\bks}\FF$, then we have
\begin{eqnarray*}
\left|\sigma_s(\EE)(u) - \sigma_{n,s}(\alpha)(u)\right| & = & \left|\sum_{\alpha{\in}\TT_{\max}(\EE{\bks}\FF)}\varphi(\alpha)(u){-}\sum_{\alpha{\in}\TT_n{\cap}\TT_{\max}(\EE{\bks}\FF)}\varphi_n(\alpha)(u)\right|\\
& = & \left|\sum_{\alpha{\in}\TT_{\max}(\EE{\bks}\FF){\setminus}\TT_{\leq n{-}1}}\left(\varphi(\alpha)(u) {-} \varphi_n(\alpha)(u)\right)\right|\\
&\leq& \sum_{\alpha{\in}\TT_{\max}(\EE{\bks}\FF){\setminus}\TT_{\leq n{-}1}}\left|\varphi(\alpha)(u) {-} \varphi_n(\alpha)(u)\right|.\\
\end{eqnarray*}
The set $\TT_{\max}(\EE{\bks}\FF){\setminus}\TT_{\leq n{-}1}$ shrinks, when $n$ increases, because every finite trace of $\EE{\bks}\FF$ belongs to some set $\TT_{\leq k}$. Therefore, the last sum above is decreasing to $0$. Hence, since $\Omega$ is a finite set, the sequence $\sigma_{n,s}(\EE{\bks}\FF)$ converges (pointwise) to $\sigma_s(\EE{\bks}\FF)$ in $\D\Omega$. Since $\sem{\EE}{\bks}\sem{\FF}(s)$ is topologically closed, we deduce that $\sigma_s(\EE){\in}\sem{\EE}{\bks}\sem{\FF}(s)$. Therefore, $\sem{\EE{\bks}\FF}\refbyh\sem{\EE}{\bks}\sem{\FF}$.
\end{proof}

\begin{proposition}\label{pro:get-rid-of-par}
Let $r,r'{\in}\H_1\Omega$ be two atomic programs and let $\EE,\FF$ be two bundle event structures with internal probability, then 
\begin{align}
r^{\bks}\|r^{\bks}&\simref r^{\bks},\label{eq:star-trans}\\
r^{\bks}\|r'&\simref r{\bks}(r'{\cdot} r^{\bks}),\label{eq:par-atom}\\
r^{\bks}\|(b{\cdot} \EE {+} c{\cdot}\FF)& \simref  r{\bks}(b{\cdot} (r^{\bks}\|\EE)
{+}c{\cdot} (r^{\bks}\|\EE)),\label{eq:par-nondet}\\
r^{\bks}\|(r'{\cdot}\EE)&\simref r{\bks}(r'{\cdot} (r^{\bks}\|\EE)), \label{eq:par-seq}
\end{align}
where $r^{\bks} = r{\bks} 1$.
\end{proposition}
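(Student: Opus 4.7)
The plan is to construct an explicit t-simulation for each of the four refinements. In every case the left-hand side is a concurrent composition involving $r^{\bks}$, whose events carry only the label $r$ or the label $\delta$ (the latter arising from the $1$-summand in the unfolding $r^{\bks}\equiv_{\mathrm{sim}} 1{+}r{\cdot}r^{\bks}$). Because $r$ and $r'$ are atomic, any interleaving of their events with the events of the other parallel component can be serialised into a canonical linear form on the right-hand side, which provides the image of the simulation.

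For (\ref{eq:star-trans}), every trace of $r^{\bks}\|r^{\bks}$ is an interleaving $e_1\cdots e_n$ of events drawn from two disjoint chains $e^1_{r,1}\cdots e^1_\delta$ and $e^2_{r,1}\cdots e^2_\delta$. I would send such a trace to the trace of $r^{\bks}$ obtained by relabelling each $r$-event, in the order it appears, to a fresh $r$-event of the target, mapping the first occurring $\delta$-event to the unique $\delta$-event of the target, and handling a second $\delta$-event via the weak-maximal clause of \Defs{def:t-simulation}, which is legitimate because $\lambda(e^i_\delta)\refbyh\delta$.

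For (\ref{eq:par-atom}), I would send a trace of $r^{\bks}\|r'$ to the trace of $r{\bks}(r'{\cdot}r^{\bks})$ obtained by placing every $r$-event occurring before the $r'$-event into the outer $r{\bks}$, the unique $r'$-event at the branch-entry position, every subsequent $r$-event into the inner $r^{\bks}$, and the closing $\delta$-event into the inner $\delta$-event of the target. Case (\ref{eq:par-seq}) is analogous with a trace of $\EE$ replacing the trivial tail after the $r'$-event, so that later $r$-events are threaded into the $r^{\bks}$-factor of the inner $r^{\bks}\|\EE$ while $\EE$-events map identically. Case (\ref{eq:par-nondet}) reduces to (\ref{eq:par-seq}) after one initial decision: the first non-$r$ event of a trace of $r^{\bks}\|(b{\cdot}\EE{+}c{\cdot}\FF)$ is an initial event of either $b{\cdot}\EE$ or $c{\cdot}\FF$, and that choice irrevocably fixes the branch on the right, after which the trace is shuffled exactly as in (\ref{eq:par-seq}).

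The main obstacle is the handling of the trailing $\delta$-events required by the third clause of a t-simulation. A maximal trace of the left-hand side ends with its full complement of $\delta$-events (two for (\ref{eq:star-trans}), and one surplus $\delta$-event in each of the other cases coming from the outer $r^{\bks}$), whereas a maximal trace of the right-hand side carries only the single $\delta$-event of the innermost $r^{\bks}$; the surplus must be absorbed via the weak-maximality condition, which is licensed because every such event has label $\refbyh\delta$. Finiteness of $f^{-1}(\beta)$ is immediate: a fibre corresponds to the finitely many ways of threading the events of the two parallel components into the finite trace $\beta$, and the condition $\lambda(e)\refbyh\lambda(f(e))$ at each inductive step follows from atomicity of $r$ and $r'$.
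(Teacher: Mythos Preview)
Your construction for (\ref{eq:star-trans}) does not yield a t-simulation. You propose to send the \emph{first} occurring $\delta$-event to the $\delta$-event of the target $r^{\bks}$, but the $\delta$-event in the target is terminal: once it appears, the target trace cannot be extended by further $r$-events. Yet on the source side the other component of $r^{\bks}\|r^{\bks}$ may still contribute $r$-events after the first $\delta$. Concretely, the trace $e^1_\delta\,e^2_{r,1}$ is a legitimate trace of the left-hand side, but under your map $f(e^1_\delta)$ is already the $\delta$-event of the target and there is no event $e'$ in $r^{\bks}$ with $f(e^1_\delta)\,e'\in\TT(r^{\bks})$ and $\lambda(e^2_{r,1})\refbyh\lambda(e')$.

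The paper does the opposite: it uses the \emph{skip clause} of Definition~\ref{def:t-simulation} for every $\delta$-event except the last, and only outputs the target $\delta$ once \emph{both} source $\delta$-events have appeared (at which point the source trace is maximal). Your appeal to the ``weak-maximal clause'' is backwards: weak maximality is a constraint on the \emph{image} $f(\alpha e)$ when $\alpha e$ is maximal; it is not a device for absorbing surplus events on the source side. Absorption of source $\delta$-events is the job of the skip clause, and crucially that clause is \emph{forbidden} on the final event of a maximal trace, which is why the last $\delta$ must be the one that is mapped forward. The same confusion affects your handling of (\ref{eq:par-atom})--(\ref{eq:par-seq}): when the $\delta$ of $r^{\bks}$ fires \emph{before} the $r'$-event (e.g.\ the maximal trace $r^m\delta\,r'$), you must skip that $\delta$ and let the final $r'$ map to the weakly maximal trace $r^m r'$ of the target, as the paper does; your description sends only the ``closing'' $\delta$ to the inner $\delta$ and is silent on this early-$\delta$ case.
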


\begin{proofsummary}
  These inequations are again verified by explicitly constructing a t-simulation from the left-hand side to the right-hand side of the inequality. The following reasoning illustrates such a construction for \Eqn{eq:star-trans}.

  Denote by $e_1$ and $e_2$ (resp. $e$) the events that are labelled by $\delta$ in the event structure associated to $r^{\bks}\|r^{\bks}$ (resp. $r^{\bks}$). We construct an operation $(')$ such that, given a trace $\alpha$ of $r^{\bks}\|r^{\bks}$ that does not contain any of the $e_i$s, we define $\alpha'$ to be the unique trace corresponding to $\alpha$ in $r^{\bks}$ (i.e. with the same number of events labelled by $r$). A t-simulation from $r^{\bks}\|r^{\bks}$ to $r^{\bks}$ is obtained by considering a function $f$ such that \begin{displaymath} f(\alpha) = \begin{cases}
      (\alpha{\setminus}\{e_1,e_2\})' & \textrm{if } e_1{\notin}\alpha\textrm{ or }e_2{\notin}\alpha,\\
      (\alpha{\setminus}\{e_1,e_2\})'e & \textrm{if } e_1,e_2{\in}\alpha.\end{cases} \end{displaymath} \end{proofsummary}

\begin{proof}
Let us denote by $e_1$ and $e_2$ (resp. $e$) the events that are labelled by $\delta$ in the event structure associated to $r^{\bks}\|r^{\bks}$ (resp. $r^{\bks}$). Given a trace $\alpha$ of $r^{\bks}\|r^{\bks}$ that does not contain any of the $e_i$s, we denote by $\alpha'$ unique trace corresponding to $\alpha$ in $r^{\bks}$ (i.e. with the same number of events labelled by $r$). 

A t-simulation from $r^{\bks}\|r^{\bks}$ to $r^{\bks}$ is obtained by considering a function $f$ such that 
\begin{displaymath}
f(\alpha) = \left\lbrace
\begin{array}{cl}
(\alpha{\setminus}\{e_1,e_2\})' & \textrm{if } e_1{\notin}\alpha \textrm{ or } e_2{\notin}\alpha,\\
(\alpha{\setminus}\{e_1,e_2\})'e & \textrm{if } e_1,e_2{\in}\alpha. 
\end{array}\right.
\end{displaymath}

The t-simulation~(\ref{eq:par-atom}) is constructed as follows. Let us abstract the event names, i.e. $r^k$ would be a trace where each $r$ is the label of a unique event. Every trace of $r^{\bks}\|r'$ is a prefix of $r^mr'r^{n}\unity$ or $r^m\unity r'$, for some $m,n\geq 0$. Every prefix of either trace corresponds to a unique trace of $r{\bks} (r'{\cdot} r^{\bks})$. For instance, the maximal trace $r^m\unity r'$ is associated to the weakly maximal trace $r^mr'$ of $r{\bks} (r'{\cdot} r^{\bks})$. Figure~\ref{fig:messy-simulation} shows an explicit construction of the t-simulation.
\begin{figure}
\begin{small}
\begin{displaymath}
\xymatrix@C-=0.2cm{
&&\emptyset\ar[d]\ar[dl]\ar[drrr]\ar@{.>}[rrrrrrrr]&&&&&&&&\emptyset\ar[dl]\ar[dr]&&&\\
&\unity\ar[dl]\ar@{.>}@/^/[rrrrrrrrur]&\ar@{.>}@/^/[rrrrrrrrr]r\ar[d]\ar[dl]\ar[drr]&&&r'\ar@{.>}[rrrr]\ar[d]\ar[dr]&&&&r'\ar[dl]\ar[d]&&r\ar[dr]\ar[d]&&\\
\unity r'\ar@{.>}@/^/[rrrrrrrrru]&r\unity\ar[dl]\ar@{.>}@/^/[rrrrrrrrrru]&\ar@{.>}@/^/[rrrrrrrrrr]rr\ar[dl]\ar[d]\ar[dr]&&rr'\ar@{.>}@/_/[rrrrrrr]\ar[d]\ar[dr]&r'\unity&r'r\ar[d]\ar[dr]&&r'\unity&r'r\ar[dl]\ar[d]&&rr'\ar[d]\ar[dl]&rr\ar[dr]\ar[d]&\\
r\unity r'\ar@{.>}[rrrrrrrrrrru]&rr\unity\ar[dl]\ar@{.>}@/_/[rrrrrrrrrrru]&rrr\ar@{.>}@/_/[rrrrrrrrrrr]\ar[d]&rrr'\ar@{.>}@/^/[rrrrrrrrr]\ar[d]&rr'\unity&rr'r\ar[d]&r'r\unity&r'rr\ar[d]&r'r\unity&rr'r\ar[dl]\ar[d]&rr'\unity&rr'r\ar[dl]\ar[d]&rrr'\ar[d]&rrr\ar[d]\\
rr\unity r'&&\dots&\dots&&\dots&&\dots&r'rr\unity&\dots&r'rr\unity&\dots&\dots&\dots
}
\end{displaymath}
\end{small}
The ``obvious" arrows, such as an arrow from $r'\unity$ to $r'\unity$, have been left out to keep the picture clear.
\caption{The t-simulation from $r^{\bks}\|r'$ to $r{\bks}(r'{\cdot} r^{\bks})$.}\label{fig:messy-simulation}
\end{figure}

The Simulation~(\ref{eq:par-nondet} )is similar. Every trace of $r^{\bks}\|(b{\cdot}\EE{+}c{\cdot}\FF)$ is a prefix of $r^mb\alpha$ or $r^mc\beta$ or $r^m\unity b\gamma$ or $r^m\unity c\zeta$, where $\alpha{\in}\TT(r^{\bks}\|\EE)$, $\beta{\in}\TT(r^{\bks}\|\FF)$, $\gamma{\in}\TT(\EE)$, $\zeta{\in}\FF$ and $n\geq0$. Again, prefixes of the first two traces correspond to a unique trace of $r{\bks}(b{\cdot} (r^{\bks}\|\EE) {+} c{\cdot}(r^{\bks}\|\FF))$. The maximal trace $r^m\unity b\gamma$ is again mapped to the weakly maximal trace $r^mb\gamma$. Similarly for the fourth case. This indeed results in a t-simulation.

The Simulation~(\ref{eq:par-seq}) is constructed as follows. Every trace of $r^{\bks}\|(r'{\cdot}\EE)$ is a prefix of $r^mr'\alpha$ or $r^m\unity r'\beta$ for some trace $\alpha{\in}\TT(r^{\bks}\|\EE)$ and $\beta{\in}\TT(\EE)$. We continue as in the previous case.
\end{proof}

\Prop{pro:get-rid-of-par} is used mainly to interleave the right operand $r^{\bks}$ systematically with the internal structure of $\EE$, while preserving the simulation order. More precisely, these equations are applied to generate algebraic proofs for the reduction of one expression into another, where the occurrence of $\|$ is pushed deeper into the sub-expressions (and possibly removed).

\section{Probabilistic rely-guarantee conditions}\label{sec:prgc}

Our first task towards the extension of the rely-guarantee method to probabilistic systems is to provide a suitable definition of a rely condition that contains sufficient quantitative information about the environment and the components of a system. 

From a relational point of view, as in Jones' thesis~\cite{Jon81}, a guarantee condition expresses a constraint between a state and its successor by running the relation as a nondeterministic program. Therefore, it is important to know whether some action is executed atomically or whether it is split into smaller components. 
For instance, when run in the same environment, a probabilistic choice between $x{:=}x{+}1$ and $x{:=}x{-}1$ produced from an \texttt{if\dots then\dots else} clause may behave differently from an atomic probabilistic assignment that assigns $x{+}1$ and $x{-}1$ to $x$ with the exact same probability.

Without probability, a common example of a guarantee condition for a given program is the reflexive transitive closure with respect to $(\|)$ of the union of all atomic actions in that program~\cite{Hoa09a} which completely captures all possible ``effects" of the program. Such a closure property plays a crucial role in the algebraic proof of Rule~\ref{rule:rg-standard} is achieved through \Prop{pro:get-rid-of-par}. This construction was introduced by Jones~\cite{Jon81} and later refined by others~\cite{Din02,Jon12,Hoa09a}.

Non-probabilistic rely-guarantee conditions usually take the form $\rho^{\bks}$ for some binary relation $\rho$, defined on the state space of the studied program. The transitive closure of $\rho$ with respect to the relational composition $(\cdot)$ is usually a desirable property. To obtain a probabilistic guarantee condition from a relation $\rho{\subseteq}\Omega{\times}\Omega$, we construct a probabilistic program $r{\in} \H_1\Omega$  such that 
\[
	r(s) = \{\mu{\in}\D \Omega\ |\ \mu(\{s'\ |\ (s,s'){\notin} \rho\}) = 0\}.
\]
Equivalently, $r$ is the convex closure of $\rho$. The following proposition then follows from that construction.
\begin{proposition}\label{pro:transitive-convex-closure}
If a relation $\rho{\subseteq}\Omega{\times}\Omega$ is transitive, then the convex closure $r$ of $\rho$ satisfies $r{\cdot}(r{+}\unity)\refbyh r$.
\end{proposition}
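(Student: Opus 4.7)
My plan is to unfold $\refbyh$ pointwise: fix an arbitrary $s \in \Omega$ and show that every element of $(r \cdot (r + \unity))(s)$ lies in $r(s)$, i.e.\ is supported on $S_s := \{s' \mid (s,s') \in \rho\}$. By \Eqn{eq:6-sequential-H}, any such element has the form $f \convol \mu$ with $\mu \in r(s)$ and $f \in \J_1\Omega$ such that $f \refbyh r + \unity$.

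First I would observe that $\mu \in r(s)$ forces $\mu$ to be supported on $S_s$. Unfolding $r + \unity$ as a union of probabilistic combinations $r \pc{p} \unity$, the refinement $f \refbyh r + \unity$ gives, for each $s' \in \Omega$, a decomposition $f(s') = (1 - p_{s'})\nu_{s'} + p_{s'}\delta_{s'}$ with $p_{s'} \in [0,1]$ and $\nu_{s'} \in r(s')$; by definition of $r$, each $\nu_{s'}$ is supported on $\{t \mid (s',t) \in \rho\}$.

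Next I would fix $s'' \notin S_s$ and show that
\[
(f \convol \mu)(s'') \;=\; \sum_{s' \in S_s} \mu(s')\bigl[(1-p_{s'})\nu_{s'}(s'') + p_{s'}\delta_{s'}(s'')\bigr]
\]
vanishes, by analysing the two summands separately. The point-mass piece $p_{s'}\delta_{s'}(s'')$ is nonzero only when $s' = s''$, which is impossible since $s' \in S_s$ while $s'' \notin S_s$. The convex piece $(1-p_{s'})\nu_{s'}(s'')$ is nonzero only when $(s', s'') \in \rho$; combined with $(s, s') \in \rho$, transitivity of $\rho$ would force $(s, s'') \in \rho$, contradicting $s'' \notin S_s$. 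Hence $f \convol \mu$ is supported on $S_s$, so it lies in $r(s)$.

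The whole argument is essentially a definition chase once $r + \unity$ is expanded via $\pc{p}$; transitivity of $\rho$ enters at exactly one point, the final chain $(s,s') \in \rho \wedge (s',s'') \in \rho \Rightarrow (s,s'') \in \rho$, which is the real content of the proposition. I do not anticipate any genuine obstacle beyond bookkeeping.
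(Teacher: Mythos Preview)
Your proposal is correct and follows essentially the same approach as the paper: fix $s$, write an arbitrary element of $(r\cdot(r{+}\unity))(s)$ as $f\convol\mu$ with $\mu\in r(s)$ and $f\refbyh r{+}\unity$, and show it vanishes at any $s''$ with $(s,s'')\notin\rho$ by restricting the convolution sum to the support of $\mu$ and then invoking transitivity. Your explicit decomposition $f(s') = (1-p_{s'})\nu_{s'} + p_{s'}\delta_{s'}$ is in fact slightly more careful than the paper's argument, which asserts ``$f(t)(u)=0$ for $(t,u)\notin\rho$'' without separately dispatching the $\delta_{s'}$ contribution as you do.
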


\begin{proofsummary}
It follows immediately from the transitivity of $\rho$ and the definition of $r$.
\end{proofsummary}

\begin{proof}
Let $\rho$ be a transitive relation, $r$ its associated probabilistic program, $s{\in}\Omega$ a state and  $\mu{\in} [r{\cdot}(r{+}\unity)](s)$. We need to show that $\mu{\in} r(s)$. By definition of the sequential composition $(\cdot)$ (\Eqn{eq:6-sequential-H}), there exists $\nu{\in} r(s)$ and a deterministic program $f\refbyh (1{+}r)$ such that $\mu = f{\convol}\nu$. Let $u{\in}\Omega$ such that $(s,u){\notin}\rho$, we are going to show that $\mu(u) = 0$. We have:
\begin{align*}
\mu(u) = \sum_{t{\in}\Omega} f(t)(s)\nu(t) = \sum_{t{\in}\Omega\wedge (s,t){\in}\rho} f(t)(u)\nu(t) = \sum_{t{\in}\Omega\wedge (s,t){\in}\rho\wedge (t,u){\in}\rho}f(t)(u)\nu(t).
\end{align*}
The second equality follows from $\nu(t) = 0$ for every $(s,t){\notin}\rho$. Similarly, the last equality follows from $f(t)(u) = 0$ for $(t,u){\notin}\rho$. The last expression reduces to $\sum_{t{\in}\Omega\wedge (s,u){\in}\rho}f(t)(u)\nu(t)$, by transitivity of $\rho$, which is an empty sum because $(s,u){\notin}\rho$. Therefore, $\mu(u) = 0$ for every $(u,s){\notin}\rho$, that is $\mu{\in} r(s)$.
\end{proof}

The convex closure of a relation $\rho$, given in \Prop{pro:transitive-convex-closure}, sometimes provides a very general rely condition that is too weak to be useful in the probabilistic case. In practice, a probabilistic assignment is considered atomic and the correctness of many protocols is based on that crucial assumption. Hence the random choice and the writing of the chosen value into a program variable $x$ is assumed to happen instantaneously and no other program can modify $x$ during and in-between these two operations. Thus, probabilistic rely and guarantee conditions need to capture the probabilistic information in such an assignment.

\begin{example} Let $x$ be a (integer) program variable with values bounded by $0$ and $n$. Let us write $x{:=}\mathtt{uniform}(0,n)$ for the program that assigns a random integer between two integers $0$ and $n$ to the variable $x$. A probabilistic guarantee condition for that assignment is obtained from the probabilistic program $r$ that satisfies, for every integer $s\in[0,n]$,
 \begin{equation}\label{eq:example-rely} 
r(s) = \left\lbrace\mu\ \left| \ \mu(\{0,n\}) \geq\frac{1}{n{+}1}\right.\right\rbrace.  \end{equation} 
The condition $r$ specifies the convex set of all probabilistic deterministic programs whose atomic actions establish a state in $\{0,n\}$ with probability at least $\frac{1}{n{+}1}$. In particular, $r$ is an overspecification of  $x{:=}\mathtt{uniform}(0,x)$ where the rhs occurrence of $x$ is evaluated to the initial value of $x$. Since $r$ is transitive, it can prove useful to deduce quantitative properties of $(x{:=}\mathtt{uniform}(0,x))^{\bks}$.  \end{example}

In practice, constructing a useful transitive probabilistic rely-guarantee condition is difficult, but the standard technique is still valid: the strongest guarantee condition of a given program is the nondeterministic choice of all atomic actions found in that program.
\begin{definition}\label{def:rely}
A \emph{probabilistic rely} or \emph{guarantee condition} $R$ is a probabilistic concurrent program such that $R\|R \simref R$.
\end{definition} 

In particular, the concurrent program $r^{\bks} = r{\bks}1$ is a rely condition because 
\begin{equation}\label{eq:rely-closure}
r^{\bks}\|r^{\bks}\simref r^{\bks}
\end{equation}
holds in the event structure model (\Prop{pro:get-rid-of-par} \Eqn{eq:star-trans}). This illustrates the idea that a rely condition specifies an environment that can stutter or execute a sequence of actions that are bounded by $r$. 

\section{Probabilistic rely-guarantee calculus}\label{sec:prg-rules}

In this section, we develop the rely-guarantee rules governing programs involving probability and concurrency. An example is given by Rule~\ref{rule:rg-standard}, which allows us check the safety properties of the subsystems and infer the correctness of the whole system in a compositional fashion. We provide a probabilistic version of that rule.

In the previous sections, we have developed the mathematical foundations needed for our interpretation of Hoare triples and \emph{guarantee} relations, namely, the sequential refinement $\refby$ and simulation-based order $\simref$. Following~\cite{Hoa09a}, we only adapt the orders in the algebraic interpretation of rely-guarantee quintuples (\Eqn{eq:rgspec}). That is, validity of probabilistic rely-guarantee quintuples is captured by
\[
\triple{P\ R}{\EE}{G\ Q} \ \Leftrightarrow \ 	P{\cdot}(R\|\EE)\refby Q\, \wedge\, \EE\simref G,
\]
where $P,\EE$ and $Q$ are probabilistic concurrent programs and $R$ and $G$ are rely-guarantee conditions. The first part is seen as a probabilistic instance of the contraction of~\cite{Arm14} which specifies the functional behaviour of $R\|\EE$ under a precondition $P$. The second part uses the simulation order which is compositional and  very sensitive to the structural properties of the program.

The conditions $R$ and $G$ specify how the component $\EE$ interacts with its environment. As we have discussed in the previous section, rely and guarantee conditions are obtained by taking $r^{\bks} = r{\bks}\delta$ for some atomic probabilistic program $r$. Therefore, $\EE\simref r^{\bks}$ implies that all actions carried by events in $\EE$ are either stuttering or satisfying the specification $r$. This corresponds to the standard approach of Jones~\cite{Jon12,Jon81}.

The following rules are probabilistic extensions of the related rely-guarantee rules developed in~\cite{Hoa11,Jon12}. These rules are  sound with respect to the event structure semantics of Section~\ref{sec:es}.

\noindent{\textbf{Atomic action: }}
The rely-guarantee rule for an atomic statement $r'$ is provided by the equation
\begin{equation}\label{rule:atomic}
r^{\bks}\|r'\simref r{\bks}(r'{\cdot} r^{\bks})
\end{equation}
where $r$ is the rely condition. This equation shows that a (background) program satisfying the rely condition $r$ will not interfere with the low level operations involved in the atomic execution of $r'$. The programs will be interleaved.

\noindent{\textbf{Conditional statement:}}
The rely-guarantee rule for conditional statement is provided by the equation
\begin{equation}\label{rule:conditional}
r^{\bks}\|(b{\cdot} \EE {+} c{\cdot}\FF) \simref  r{\bks}(b{\cdot} (r^{\bks}\|\EE) {+} c{\cdot}(r^{\bks}\|\FF)).
\end{equation}
This equation shows how a rely condition $r^{\bks}$ distributes through branching structures. The tests $b$ and $c$ are assumed to be atomic and their disjunction is always \emph{true} (this is necessary for feasibility). This assumption may be too strong in general because $b$ may involve the reading of some large data that is too expensive to be performed atomically. However, we may assume that such a reading is done before the guard $b$ is checked and the non-atomic evaluation of the variables involved in $b$ may be assigned to some auxiliary variable that is then checked atomically by $b$.

\noindent{\textbf{Prefixing: }}
the sequential rely-guarantee rule for a probabilistic program expressed using prefixing. We have  
\begin{equation}\label{rule:prefix}
r^{\bks}\|(r'{\cdot}\EE)\simref r{\bks}(r'{\cdot}(r^{\bks}\|\EE)).
\end{equation}
It generalises Rule~\ref{rule:atomic} and tells us that a rely condition $r^{\bks}$ distributes through the prefixing operation. In other words, the program $r'$ and  $\EE$ should tolerate the same rely condition in order to prove any meaningful property of $r{\cdot}\EE$. This results from of our interpretation of $\|$ where no synchronisation is assumed. 

\noindent{\textbf{Concurrent execution:}}
in Rule~\ref{rule:rg-standard}, the concurrent composition $\EE\|\EE'$ requires an environment that satisfies $R{\cap} R'$ to establish the postcondition $Q{\cap} Q'$. However, such an intersection is not readily accessible at the structural level of event structures. Therefore, the most general probabilistic extension of Rule~\ref{rule:rg-standard} which applies to our algebraic setting is:
\begin{equation}\label{rule:rg-general}
\frac{\triple{P\ R}{\EE}{G\ Q}\qquad \triple{P\ R'}{\EE'}{G'\ Q'}\qquad G\simref R'\qquad G'\simref R}{\triple{P\ R''}{\EE\|\EE'}{G\|G'\ Q}},
\end{equation}
where $R''$ is a rely condition such that $R''\simref R$ and $R''\simref R'$.
The proof of this rule is exactly the same as in~\cite{Hoa11,Rab13}. In fact, we have $R''\simref R$, $\EE'\simref R$, $R\| R\simref R$, therefore \Eqn{eq:par-assoc} and Equational Implication~(\ref{eq:congruence}) imply 
\[
	R''\|(\EE'\|\EE)  \simref  R\|(R\|\EE)\simref R\|\EE,
\]
and we obtain $P{\cdot} R''\|(\EE'\|\EE) \simref P{\cdot} (R\|\EE)$ by \Eqn{eq:monotony}. It follows from \Thm{thm:trace-imply-distribution} that $P{\cdot} R''\|(\EE'\|\EE) \refby Q$. 

The conclusion does not contain any occurrence of $Q'$, but by symmetry, it is also valid if $Q'$ is substituted for $Q$. The combined rely condition $R''$ is constructed such that it is below $R$ and $R'$. Indeed, if $R,R'$ have a greatest lower bound with respect to $\simref$, then $R''$ can be taken as that bound, so that the strengthening of the rely is as week as possible. 

The above rule can be specialised by considering rely-guarantee conditions of the form $r^{\bks}$, where $r$ is an atomic probabilistic program. The following rule is expressed in exactly as in the standard case~\cite{Hoa11}. This is possible because probabilities are internal.

\begin{proposition}\label{pro:rule1}
The following rule is valid in BES:
\begin{equation}\label{rule:rg-atom-rely}
\frac{\triple{P\ r_1^{\bks}}{\EE_1}{g_1^{\bks}\ Q_1}\qquad \triple{P\ r_2^{\bks}}{\EE_2}{g_2^{\bks}\ Q_2}\qquad g_1\refbyh r_2\qquad g_2\refbyh r_1}{\triple{P\ (r_1{\cap} r_2)^{\bks}}{\EE_1\|\EE_2}{(g_1 {+} g_2)^{\bks}\ Q_1}},
\end{equation}
where $r,r',g,g'{\in}\H_1\Omega$ and $g{+}g'$ is the nondeterministic choice on $\H_1\Omega$.
\end{proposition}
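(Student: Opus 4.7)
I would reduce~(\ref{rule:rg-atom-rely}) to an instance of the general concurrent-execution rule~(\ref{rule:rg-general}) by taking $R{=}r_1^{\bks}$, $R'{=}r_2^{\bks}$, $G{=}g_1^{\bks}$, $G'{=}g_2^{\bks}$, and $R''{=}(r_1{\cap} r_2)^{\bks}$. The linchpin is a monotonicity lemma: if $r\refbyh r'$ in $\Hip\Omega$, then $r^{\bks}\simref r'^{\bks}$. This follows directly from the inductive construction of \Sec{S:seqred}: the event structures underlying $r^{\bks}$ and $r'^{\bks}$ share the same events, bundle relation, conflicts and exits, the only difference being that the iteration-events are labelled by $r$ in the first and $r'$ in the second. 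The identity renaming is total with finite preimages, fixes the empty trace, preserves maximality, and satisfies $\lambda(e)=r\refbyh r'=\lambda(f(e))$, so by \Defs{\ref{def:t-simulation}} it is a t-simulation.

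\textbf{Guarantee part.} Each atomic refinement $g_i\refbyh g_1{+}g_2$ lifts by the lemma to $g_i^{\bks}\simref(g_1{+}g_2)^{\bks}$. Combined with the premises $\EE_i\simref g_i^{\bks}$, transitivity of $\simref$ gives $\EE_i\simref(g_1{+}g_2)^{\bks}$ for $i=1,2$. Two applications of the congruence implication~(\ref{eq:congruence}), followed by collapsing with~(\ref{eq:star-trans}), yield
\[
\EE_1\|\EE_2\ \simref\ (g_1{+}g_2)^{\bks}\|(g_1{+}g_2)^{\bks}\ \simref\ (g_1{+}g_2)^{\bks}.
\]

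\textbf{Hoare part.} The lemma supplies $(r_1{\cap} r_2)^{\bks}\simref r_1^{\bks}$ (from $r_1{\cap} r_2\refbyh r_1$); chaining $\EE_2\simref g_2^{\bks}\simref r_1^{\bks}$, where the second step uses the hypothesis $g_2\refbyh r_1$, yields $\EE_2\simref r_1^{\bks}$. Reassociating by~(\ref{eq:par-comm}) and~(\ref{eq:par-assoc}), applying congruence~(\ref{eq:congruence}) factorwise, and collapsing by~(\ref{eq:star-trans}), one derives
\[
(r_1{\cap} r_2)^{\bks}\,\|\,(\EE_1\|\EE_2)\ \simref\ r_1^{\bks}\,\|\,\EE_1.
\]
Monotonicity of sequential composition~(\ref{eq:monotony}) followed by \Thm{thm:trace-imply-distribution} turns this into $P{\cdot}((r_1{\cap} r_2)^{\bks}\|(\EE_1\|\EE_2))\refby P{\cdot}(r_1^{\bks}\|\EE_1)$, and chaining with the first premise $P{\cdot}(r_1^{\bks}\|\EE_1)\refby Q_1$ closes the argument.

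\textbf{Main obstacle.} The only genuinely nontrivial step is the monotonicity lemma; everything else is equational bookkeeping in the simulation laws of Propositions~\ref{pro:necessary-axioms} and~\ref{pro:get-rid-of-par}. The potential subtlety is confirming that the inductive construction of $r^{\bks}$ depends on $r$ purely through the labelling map, so that the combinatorial skeleton is shared by $r^{\bks}$ and $r'^{\bks}$ and a bijective t-simulation is available; this is immediate from inspection of the clauses of \Sec{S:seqred}, but worth making explicit.
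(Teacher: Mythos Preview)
Your proposal is correct and follows essentially the same route as the paper: instantiate Rule~(\ref{rule:rg-general}) with $R{=}r_1^{\bks}$, $R'{=}r_2^{\bks}$, $G{=}g_1^{\bks}$, $G'{=}g_2^{\bks}$, $R''{=}(r_1{\cap}r_2)^{\bks}$ and then collapse $g_1^{\bks}\|g_2^{\bks}\simref(g_1{+}g_2)^{\bks}$ via~(\ref{eq:star-trans}). Your monotonicity lemma ($r\refbyh r'\Rightarrow r^{\bks}\simref r'^{\bks}$) is exactly the step the paper leaves implicit when passing from $g_1\refbyh r_2$ to $G\simref R'$ and from $r_1{\cap}r_2\refbyh r_i$ to $R''\simref R,R'$; making it explicit is a genuine improvement in rigor over the paper's two-line proof.
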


\begin{proof}
This follows from substituting $R$ and $G$ by respectively $r^{\bks}$ and $g^{\bks}$ in Rule~\ref{rule:rg-general}. Moreover $g^{\bks}\|g'^{\bks}\simref (g{+} g')^{\bks}$ holds because $(g{+}g')^{\bks}\|(g{+}g')^{\bks}\simref (g{+}g')^{\bks}$ (\Eqn{eq:rely-closure}).
\end{proof}

Recall that the nondeterministic choice of $\H_1\Omega$ is obtained by the pointwise union followed by the necessary closure properties for the elements of $\H_1\Omega$. The intersection $r{\cap} r'$ is obtained by pointwise intersection.

\noindent{\textbf{Iteration:}}
a while program is modelled by using the binary Kleene star. The idea is to unfold the loop as far as necessary. The conditional and prefix (sequential) cases can then be applied on the unfolded structure to distribute the rely condition. That is, we write 
\begin{align*}\label{rule:while}
r^{\bks}\|((b{\cdot} \EE){\bks} c) & \simref r{\bks}(c{\cdot} r^{\bks} {+} b{\cdot} (r^{\bks}\|[\EE{\cdot}(b{\cdot}\EE{\bks} c)])).
\end{align*}
If $\EE$ is sequential, then $r^{\bks}$ can be ``interleaved" within the internal structure of $\EE{\cdot}(b{\cdot}\EE{\bks} c)$ by applying the prefixing and conditional statement rules.

The sequential correctness is achieved by the usual generation of probability distributions, obtained from terminating sequential behaviours, on the ``totally" unfolded event structure (assuming that $\EE$ is sequential). The sequential behaviours are usually obtained by interleaving the rely condition $r^{\bks}$ through the internal structure of the unfolded loop. A bounded loop, such as a for loop, should be modelled using a sequence of sequential compositions or prefixing.

\section{Application: a faulty Eratosthenes sieve}\label{sec:application}

In this section, we show how to use the previously established rely-guarantee rules to verify a probabilistic property of a faulty Eratosthenes sieve, which is a quantitative variant Jones' example~\cite{Jon81}.

Let $n\geq 2$ be a natural number and $s_0 = \{2,3,\dots,n\}$. For each integer $i$ such that $2\leq i\leq \sqrt n$, we consider a program $\thread_i$ that sequentially removes all (strict) multiples of $i$ from the shared set variable $s$ with a fixed probability $p$. More precisely, each thread $\thread_i$ is implemented as the following program: 
\[
	\begin{array}{lc}
		\texttt{for(j = 2 to n/i)} &  \\
		\qquad u_{i,j}:\  \texttt{skip}\ \pc{p}\ \texttt{remove(i*j from s)};&
	\end{array}
\]
where $\texttt{n/i}$ is the integer division of $\texttt{n}$ by $\texttt{i}$. Each $u_{i,j}$ can be seen as a faulty action that removes the product ${ij}$ from the current value of $s$ with probability $p$. The state space of each atomic deterministic program $u_{i,j}$ is $\Omega = \{ s\ |\ s{\subseteq} s_0\}$. In $\H_1\Omega$, $u_{i,j}$ is defined by $u_{i,j}(s) = (1{-}p)\delta_s {+} p\delta_{s{\setminus}\{ij\}}$. 
The whole system is specified by the concurrent execution 
\[
	\thread_2\| ... \|\thread_{\sqrt n}= \|_{i=2}^{\sqrt n}(u_{i,2}\cdots u_{i,\nicefrac{n}{i}}).
\]
where, in the sequel, $\sqrt n$ is computed without decimals.

Let  $\pi {=} \{2,3,\dots, m\}$  be the set of prime numbers in $s_0$. Our goal is to compute a ``good" lower bound probability that the final state is $\pi$, after executing the threads $\thread_i$ concurrently, from the initial state $s_0$.

We denote by $O_{i,j} = \{s\ |\ ij{\notin} s\}\subseteq\Omega$ and 
\[
	Q_{i,j}(s) = \{\mu{\in}\D\Omega \ |\ \mu(O_{i,j}){\geq} p\wedge \mu(\{s'\ | \ s'{\subseteq} s\}) {=} 1\}
\]
a specification of a probabilistic program that removes $ij$ from the state $s$ with at least probability $p$ and does not add anything to it. We define $O_i = {\cap}_{j=2}^{\nicefrac{n}{i}} O_{i,j}$, $Q_i = Q_{i,2}{\cdot} Q_{i,3}{\cdot}\dots{\cdot} Q_{i,\nicefrac{n}{i}}$ and $r$ to be the probabilistic program such that $r(s)$ is the convex closure of $\{\delta_{s'} \ |\ s'{\subseteq} s\}$. 

First, we show that every thread $\thread_i$ guarantees $r^{\bks}$. Second, we show that $\thread_i$ establishes $Q_i$ when run in an environment satisfying $r$, i.e. $r^{\bks}\|\thread_i\refby Q_i$, using the atomic and prefix rules~\ref{rule:atomic} and~\ref{rule:prefix}. Finally, we  apply the concurrency rule~\ref{rule:rg-atom-rely} to deduce that the system $\|_{i=2}^{\sqrt n}\thread_i$ establishes all postconditions $Q_2,Q_3,\dots Q_{\sqrt n}$, when run in an environment satisfying $r$.

\noindent{\textbf{Establising $\thread_i\simref r^{\bks}$ and $r^{\bks}\|\thread_i\refby Q_i$}}
 
On the one hand, it is clear that $u_{i,j}\refbyh r$, for every $i,j$, and thus $\thread_i\simref r^{\bks}$ follows from the unfold~(\ref{eq:unfold}). On the other hand, let us show that $r^{\bks}\|\thread_i\refby Q_i$. Multiple applications of the prefix-case give 
\[
	r^{\bks}\|\thread_i \simref r{\bks}(u_{i,2}{\cdot} (r{\bks} (u_{i,3}{\cdot}(\dots r{\bks} (u_{i,\nicefrac{n}{i}}{\cdot} r^{\bks}))))).
\]
Since the right multiplication $X\mapsto X{\cdot} r$, by any program $r{\in}\H_1\Omega$, is the lower adjoint in a Galois connection~\cite{Mci04}, the fixed point fusion theorem~\cite{Bac02} implies  
\[
	r{\bks}(u_{i,2}{\cdot} (r{\bks} (u_{i,3}{\cdot}(\dots r{\bks} (u_{i,\nicefrac{n}{i}}{\cdot }r^{\bks}))))) = r^{\bks}{\cdot} u_{i,2}{\cdot} r^{\bks}{\cdot} u_{i,3}{\cdot}\dots r^{\bks}{\cdot} u_{i,\nicefrac{n}{i}}{\cdot} r^{\bks},
\]
where the equality is in $\H_1\Omega$. Thus, 
\[
	r^{\bks}\|\thread_i \refby r^{\bks}{\cdot} u_{i,2}{\cdot} r^{\bks}{\cdot} u_{i,3}{\cdot}\dots{\cdot} r^{\bks}{\cdot} u_{i,\nicefrac{n}{i}}{\cdot} r^{\bks}
\]
follows from the fact that $\refby$ is weaker than $\simref$ (\Thm{thm:trace-imply-distribution}). The right hand side explicitly  states the interleaving of the rely condition $r^{\bks}$ in-between the atomic executions in $\thread_i$ as in~\cite{Jon12}. 

Moreover, since $r$ is the probabilistic version of a transitive binary relation, \Prop{pro:transitive-convex-closure} implies that $r{\cdot}(r{+}\unity)\refbyh r$. Since $\H_1\Omega$ is a probabilistic Kleene algebra~\cite{Mci05}, the right induction law of pKA implies $r^{\bks} = \unity {+} r$. This reduction of $r^{\bks}$ to $\unity{+}r$ illustrates the practical importance of transitive rely conditions. Therefore, \[
	r^{\bks}\|\thread_i \refby (\unity{+}r){\cdot} u_{i,2}{\cdot} (\unity{+}r){\cdot} u_{i,3}{\cdot}\dots (\unity{+}r){\cdot} u_{i,\nicefrac{n}{i}}{\cdot} (\unity {+}r),
\]
where the left hand side is a sequential program (thus \Prop{pro:homomorphism} enables us to use the definition of sequential composition of $\H_1\Omega$) directly
. Since $u_{i,j}\refby Q_{i,j}$, it remains to show that $(\unity{+}r){\cdot} Q_{i,2}{\cdot} (\unity{+}r){\cdot} Q_{i,3}{\cdot}\dots (\unity{+}r){\cdot} Q_{i,\nicefrac{n}{i}}{\cdot} (\unity {+}r)\refby Q_i$.

First we show that $Q_{i,j}{\cdot}(\unity{+}r)\refby Q_{i,j}$ and 
 $(\unity{+}r){\cdot} Q_{i,j}\refby Q_{i,j}$. Let $s{\in}\Omega$ and $\nu{\in} (Q_{i,j}{\cdot} (\unity{+}r))(s)$. By definition of the sequential composition in $\H_1\Omega$, there exists a probabilistic deterministic program $f\refbyh \unity {+}r$ and a distribution $\mu{\in} Q_{i,j}(s)$ such that 
$\nu(s') = \sum_{t{\in}\Omega}f(t)(s')\mu(t)$,
for every $s'{\in}\Omega$. Therefore, 

\[
	\nu (O_{i,j}{\cap}\{s'\ |\ s'{\subseteq} s\})  = \sum_{t{\in}\Omega}f(t)(O_{i,j})\mu(t) = \sum_{t{\subseteq} s}f(t)(O_{i,j})\mu(t),
\]
where the second equality follows from $\mu(\{t\ |\ t{\not\subseteq} s\}) {=} 0$, for every $\mu{\in} Q_{i,j}$. We deduce $\sum_{t{\subseteq} s}f(t)(O_{i,j})\mu(t)\geq p$, i.e. $\nu{\in}Q_{i,j}(s)$, by observing
\[
	\sum_{t{\subseteq} s}f(t)(O_{i,j})\mu(t)\geq \sum_{ij{\notin} t\wedge t{\subseteq} s}f(t)(O_{i,j})\mu(t) = \mu(O_{i,j}{\cap}\{t \ |\ t{\subseteq} s\})\geq p,
\]
because $f(t)(O_{i,j}) {=} 1$ for every $t$ such that $ij{\notin}t$ and $\mu(O_{i,j}{\cap}\{t\ |\ t{\subseteq} s\}) = \mu(O_{ij})$ for every $\mu{\in} Q_{i,j}(s)$. Consequently, $Q_{i,j}{\cdot}(\unity{+}r)\refby Q_{i,j}$.
Similarly, we can show that $(\unity{+}r){\cdot} Q_{i,j}\refby Q_{i,j}$ and thus $r^{\bks}\|\thread_i\refby Q_i$. 

\noindent{\textbf{Establising the property of $r^{\bks}\|_{i=2}^{\sqrt n}\thread_i$}}

Applying the rule~\ref{rule:rg-atom-rely} $\sqrt n{-}1$ times, we obtain, for every $Q_j$ such that $2{\leq} j{\leq} \sqrt n$,
\[
	r^{\bks}\|_{i=2}^{\sqrt n}\thread_i\refby Q_j.
\]
\noindent{\textbf{Inferring a lower bound for the probability of correctness}}

Unfortunately, Rule~\ref{rule:rg-atom-rely} does not give any explicit quantitative bound in term of probability for correctness. It does provide quantitative correctness, but all the probabilities are buried in the $Q_i$.

To obtain an explicit lower bound for the probability of removing all composite numbers, we first study the case of two threads that run concurrently. We know from Rule~\ref{rule:rg-atom-rely} that $r^{\bks}\|\thread_2\|\thread_3\refby Q_2$ and $r^{\bks}\|\thread_2\|\thread_3\refby Q_3$. Therefore, for every $\mu{\in}\sem{r^{\bks}\|\thread_2\|\thread_3}(s_0)$, we have $\mu(O_2)\geq p^{\nicefrac{n}{2}{-}1}$ and $\mu(O_3)\geq p^{\nicefrac{n}{3}{-}1}$ because there are $\nicefrac{n}{2}{-}1$ (resp. $\nicefrac{n}{3}{-}1$) multiples of $2$ (resp. $3$) in $[3,n]$ (resp. $[4,n]$). Therefore, $\mu(O_1{\cup} O_2) {+} \mu(O_2{\cap} O_3) = \mu(O_1) + \mu(O_2) \geq p^{\nicefrac{n}{2}{-}1} {+} p^{\nicefrac{n}{3}{-}1}$ and 
\begin{equation}\label{e1601}
	\mu(O_2{\cap} O_3) \geq p^{\nicefrac{n}{2}{-}1} {+} p^{\nicefrac{n}{3}{-}1} {-} 1.
\end{equation}
In the construction of the lower bound in \Eqn{e1601}, we have only used the modularity of measures and, therefore, it can be transformed into a more general rely-guarantee rule with explicit probabilities (\Prop{p1605}).

Given a subset $O{\subseteq}\Omega$ and $p{\in}[0,1]$, we write $\sem{\EE}(s_0)(O)\geq p$ if for every $\mu{\in}\sem{\EE}(s_0)$ we have $\mu(O)\geq p$. 

\begin{proposition}\label{p1605}
For every initial state $s_0$ and for all subsets $O_1,O_2{\subseteq}\Omega$,
\begin{footnotesize}
\begin{displaymath}
\frac{\sem{r_1^{\bks}\|\EE_1}(s_0)(O_1)\geq p_1\quad \sem{r_2^{\bks}\|\EE_2}(s_0)(O_2)\geq p_2\quad \EE_1\simref g^{\bks}\simref r_2^{\bks}\quad \EE_2\simref g'^{\bks}\simref r_1^{\bks}}{\sem{(r_1{\cap} r_2)^{\bks}\|\EE_1\|\EE_2}(s_0)(O_1{\cap} O_2)\geq p_1 {+} p_2 {-} 1\qquad \EE_1\|\EE_2\simref (g{+} g')^{\bks}}
.
\end{displaymath}
\end{footnotesize}
\end{proposition}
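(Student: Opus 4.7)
The plan is to leverage Rule~\ref{rule:rg-atom-rely} almost verbatim for the qualitative part of the conclusion (the guarantee $\EE_1\|\EE_2\simref(g{+}g')^{\bks}$ and the transferability of the individual rely-guarantee properties to the combined system), and then apply the modularity of probability measures to obtain the explicit numerical bound $p_1{+}p_2{-}1$.

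First I would establish, as in the proof of \Prop{pro:rule1}, the two simulations
\[
(r_1{\cap} r_2)^{\bks}\|(\EE_1\|\EE_2)\simref r_1^{\bks}\|\EE_1
\quad\text{and}\quad
(r_1{\cap} r_2)^{\bks}\|(\EE_1\|\EE_2)\simref r_2^{\bks}\|\EE_2.
\]
For the first, I use that $(r_1{\cap} r_2)^{\bks}\simref r_1^{\bks}$ (because $r_1{\cap} r_2\refbyh r_1$), that $\EE_2\simref r_1^{\bks}$ by hypothesis, and then rely on \Eqn{eq:par-assoc}, \Eqn{eq:par-comm}, the congruence of $\simref$ with respect to $\|$ (\Eqn{eq:congruence}), and closure of the rely under parallel composition $r_1^{\bks}\|r_1^{\bks}\simref r_1^{\bks}$ (\Eqn{eq:star-trans}). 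The symmetric argument gives the second simulation. Invoking \Thm{thm:trace-imply-distribution} then yields the corresponding $\refby$ refinements, which in turn imply the set-theoretic inclusion
\[
\sem{(r_1{\cap} r_2)^{\bks}\|\EE_1\|\EE_2}(s_0)\subseteq\sem{r_i^{\bks}\|\EE_i}(s_0)\quad(i=1,2)
\]
in the sense of $\refbyh$ on $\H_1\Omega$. Consequently, every $\mu\in\sem{(r_1{\cap} r_2)^{\bks}\|\EE_1\|\EE_2}(s_0)$ satisfies both $\mu(O_1)\geq p_1$ and $\mu(O_2)\geq p_2$ by the hypotheses.

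Next, I would combine these two inequalities using the modularity of $\mu$, which is a distribution on $\Omega$: since $\mu(O_1)+\mu(O_2)=\mu(O_1{\cup} O_2)+\mu(O_1{\cap} O_2)$ and $\mu(O_1{\cup} O_2)\leq 1$, rearranging gives $\mu(O_1{\cap} O_2)\geq p_1+p_2-1$. This is the desired quantitative conclusion and, as noted in the discussion preceding the proposition, this is exactly the step that was used for the two-thread sieve example.

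Finally, for the guarantee part $\EE_1\|\EE_2\simref(g{+}g')^{\bks}$, I would argue as follows: from $\EE_1\simref g^{\bks}$ and $\EE_2\simref g'^{\bks}$, congruence (\Eqn{eq:congruence}) gives $\EE_1\|\EE_2\simref g^{\bks}\|g'^{\bks}$. Since $g\refbyh g{+}g'$ and $g'\refbyh g{+}g'$, monotonicity of $(-)^{\bks}$ (a consequence of the construction of $\bks$ as a supremum of BESs) yields $g^{\bks}\simref (g{+}g')^{\bks}$ and $g'^{\bks}\simref(g{+}g')^{\bks}$, and a further application of congruence together with \Eqn{eq:star-trans} gives $g^{\bks}\|g'^{\bks}\simref (g{+}g')^{\bks}\|(g{+}g')^{\bks}\simref(g{+}g')^{\bks}$. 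Composing simulations closes the argument.

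The main technical obstacle is keeping track of the simulation-based reductions in the first step: one must chain congruence, the idempotence-like property $r^{\bks}\|r^{\bks}\simref r^{\bks}$, and monotonicity of $(-)^{\bks}$ carefully so that the rely $(r_1{\cap} r_2)^{\bks}$ genuinely absorbs both a copy of $r_1^{\bks}$ (through $\EE_2$) and a copy of $r_2^{\bks}$ (through $\EE_1$). Once this is done, \Thm{thm:trace-imply-distribution} does the heavy lifting of passing to $\refby$ and the probabilistic bound then follows from the elementary inclusion-exclusion inequality.
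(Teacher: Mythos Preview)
Your proposal is correct and follows essentially the same approach as the paper: reduce the combined system to each $r_i^{\bks}\|\EE_i$ via the concurrency rule, read off $\mu(O_i)\geq p_i$ for every $\mu$ in $\sem{(r_1{\cap}r_2)^{\bks}\|\EE_1\|\EE_2}(s_0)$, and conclude by modularity of measures; the guarantee part is handled exactly as you describe.

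The only stylistic difference is that the paper does not re-derive the simulations $(r_1{\cap}r_2)^{\bks}\|\EE_1\|\EE_2\simref r_i^{\bks}\|\EE_i$ by hand. Instead it packages each probabilistic premise into an auxiliary single-event postcondition program $Q_i$ (with label $u_i(s_0)=\{\mu\mid\mu(O_i)\geq p_i\}$), so that the hypotheses read $r_i^{\bks}\|\EE_i\refby Q_i$, and then invokes \Prop{pro:rule1} as a black box to obtain $\sem{(r_1{\cap}r_2)^{\bks}\|\EE_1\|\EE_2}\refbyh\sem{Q_i}$. Your version unpacks that invocation, arriving at the same set inclusion directly; this is arguably cleaner but amounts to re-proving the reduction already established in Rule~\ref{rule:rg-general}/\Prop{pro:rule1}.
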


\begin{proof}
Let $\mu\in\sem{(r_1{\cap} r_2)^*\|\EE\|\EE_2}(s_0)$, we need to show that $\mu(O_1{\cap} O_2)\geq p_1{+}p_2{-}1$ with the above definition of $p_1$ and $p_2$. 
 
Let us define $Q_1$ to be the (single event) ipBES whose event is labelled by the probabilistic program $u_1$ such that $u_1(s_0) = \{\mu\ |\ \mu(O_1){\geq} p_1\}$ else $u_1(s) = \D\Omega$ for $s\neq s_0$. Similarly, we define $Q_2$. Then the premises imply $r_1^*\| \EE_1\refby Q_1$ and 
$r_2^*\|\EE_2\refby Q_2$. By \Prop{pro:rule1}, we have \[
	\sem{(r_1{\cap} r_2)^*\|\EE_1\|\EE_2}\refbyh \sem{Q_1}\qquad \textrm{ and }\qquad \sem{(r_1{\cap} r_2)^*\|\EE_1\|\EE_2}\refbyh \sem{Q_2}.
\]
Therefore $\mu(O_1)\geq p_1$ and $\mu(O_2)\geq p_2$. Modularity of finite measures implies that $\mu(O_1{\cap} O_2) {+} \mu(O_1{\cup} O_2) = \mu(O_1) {+} \mu(O_2)\geq p_1 {+} p_2$. Hence, $\mu(O_1{\cap} O_2)\geq p_1 {+} p_2 {-} \mu(O_1{\cup} O_2)\geq p_1 {+} p_2 {-} 1$ since $\mu(O_1{\cup}O_2)\leq 1$.

The simulation $\EE_1\|\EE_2\simref (g{+} g')^{\bks}$ is also clear from \Prop{pro:rule1}.
\end{proof}

We know from the above discussion that 
\[
	\sem{r^{\bks}\|\thread_2\|\thread_3}(s_0)(O_2{\cap} O_3)\geq p^{\nicefrac{n}{2}{-}1} {+} p^{\nicefrac{n}{3}{-}1} {-} 1.
\]
Applying \Prop{p1605} on $\thread_2\|\thread_3$ and $\thread_4$ yields
\[
	\sem{r^{\bks}\|\thread_2\|\thread_3\|\thread_4}(s_0)(O_2{\cap} O_3{\cap}O_4)\geq p^{\nicefrac{n}{2}{-}1} {+} p^{\nicefrac{n}{3}{-}1} {+}p^{\nicefrac{n}{4}{-}1}{-} 2.
\]
Thus $\sqrt n{-}1$ applications of \Prop{p1605} give 
\[
	\sem{r^{\bks}\|_{i=2}^{\sqrt n}\thread_i}(s_0)({\cap}_{i=2}^{\sqrt n} O_i) \geq \sum_{i=2}^{\sqrt n} p^{\nicefrac{n}{i}{-}1} {-} (\sqrt n{-}2) = f(p,n).
\]
The lower bound $f(p,n)$ sometimes provides a bad lower-approximation for the probability that the system establishes ${\cap}_{i=2}^{\sqrt n}O_i$. However, it is clear that $\lim_{p{\to} 1}f(p,n) = f(1,n) = 1$. 

In the particular case of $n=15$, we have $\sqrt{15} = 3$ and we only need to consider $\thread_2$ and $\thread_3$ so that $f(p,15) = p^6 {+} p^4 {-} 1$. The plot of $f(p,15)$ in \Fig{fig:comparison} shows that $f(p,15)$ gives a positive lower bound when $p\geq 0.868$, the exact probability being $p^{10} {+} 4p^9(1{-}p) {+} 4p^8(1{-}p)^2$. 

\noindent{\textbf{Refining the lower bound}}

We can use other internal properties of the system to obtain a better lower bound. It is clear that $O_i$ is an invariant for every $\thread_j$ (for $j{\neq} i$) and that all actions $u_{i,j}$ (sequentially) commute with each other. Thus, we should obtain a better lower bound by noticing that the system is ``sequentially better" than the following interleaving: $\thread_2$ removes all (strict) multiples of $2$, $\thread_3$ removes all multiples of $3$ assuming that all multiples of $\mathrm{lcm'}(2,3)$ (the lowest common multiple of $2$ and $3$ that is strictly greater than both) have been removed by $\thread_2$, and so on~\footnote{The probability of removing all composite numbers is  usually above that bound because $6$ can be removed by either $\thread_2$ or $\thread_3$.}. Thus 
\begin{align*}
	\sem{r^{\bks}\|_{i=2}^{\sqrt n}\thread_i}(s_0)({\cap}_{i=2}^{\sqrt n} O_i) &\geq p^{\nicefrac{n}{2}-1}p^{\nicefrac{n}{3}{-}1{-}[\nicefrac{n}{6}]}p^{\nicefrac{n}{4}{-}1{-}[\nicefrac{n}{4}{-}1]}p^{\nicefrac{n}{5}{-}1{-}[\nicefrac{n}{10}{+}\nicefrac{n}{15} - \nicefrac{n}{30}]}\cdots \\
	&= g(p,n),
\end{align*}
where the square-bracketed terms are the numbers of multiples remove by threads with smaller indices. For example, before $\thread_5$ runs, $\thread_2$ removes $\nicefrac{n}{10}$ multiples of $\mathrm{lcm'}(2,5)$, $\thread_3$ removes $\nicefrac{n}{15}{-}\nicefrac{n}{30}$ multiples of $\mathrm{lcm'}(3,5)$ (not multiples of $\mathrm{lcm'}(2,5)$), thus $\thread_5$ removes the remaining $\nicefrac{n}{5}{-}1{-}[\nicefrac{n}{10}{+}\nicefrac{n}{15}{-}\nicefrac{n}{30}]$ multiples of $5$. In the particular case of $n=15$, this yields
\[
	g(p,15) = p^{\nicefrac{15}{2}{-}1}p^{\nicefrac{15}{3}-1-\nicefrac{15}{6}} = p^{7-1+5-1-2} = p^8.
\] A graphical comparison of $f,g$ and the actual probability is displayed in Figure~\ref{fig:comparison} for $n = 15$.

\begin{figure}
\centering
\includegraphics[width=0.75\linewidth]{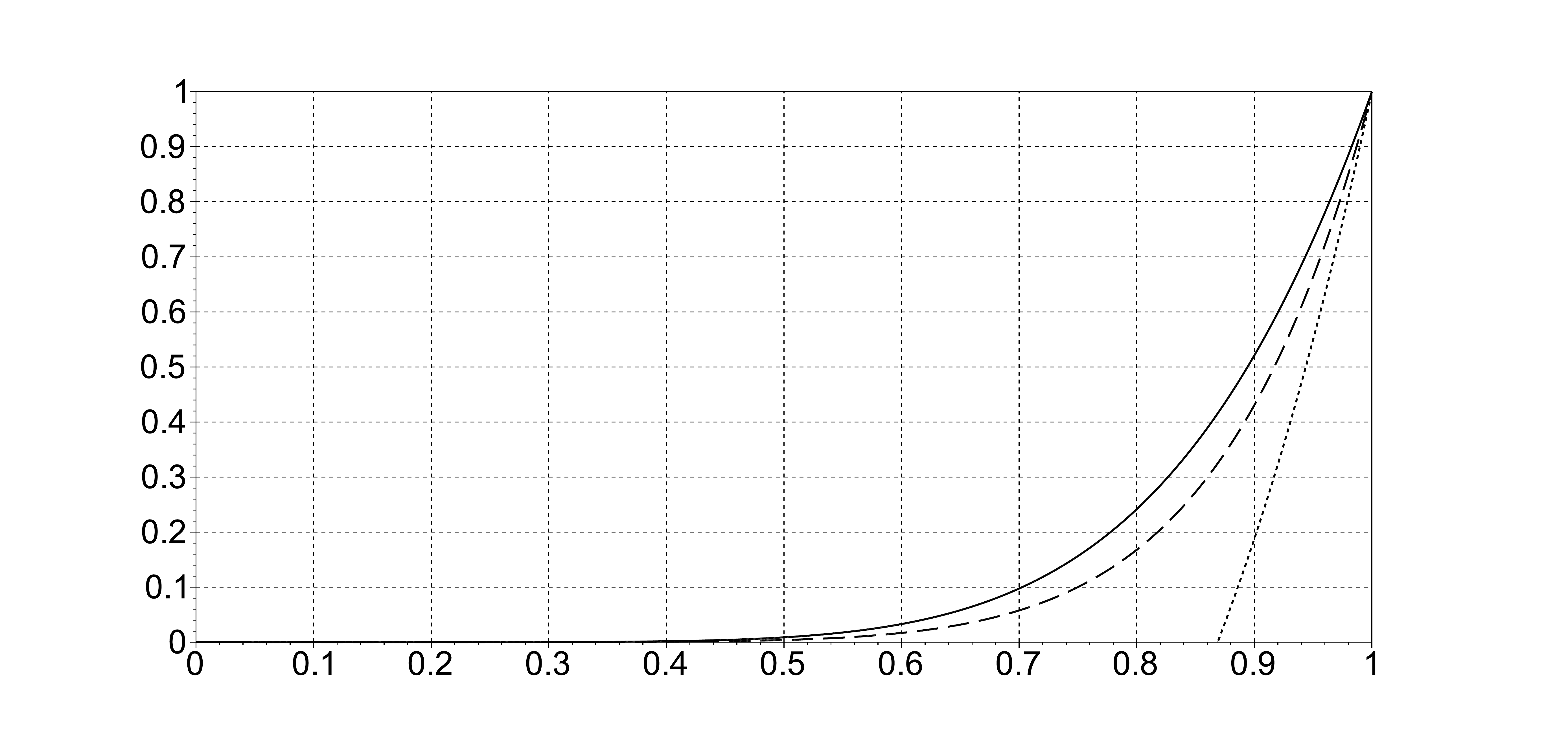}
\caption{Comparison of the quantities $f(p,15)$ (dotted), $g(p,15)$ (dashed) and the actual probability $p^{10} {+} 4p^9(1{-}p) {+} 4p^8(1{-}p)^2$ (solid).}
\label{fig:comparison}
\end{figure}

\noindent{\textbf{Establising the property of $\|_{i=2}^{\sqrt n}\thread_i$}}

Finally, notice that $\emptyset{\in} {\cap}_{i=2}^{\sqrt n}O_i$ which means that $r^{\bks}\|_{i=2}^{\sqrt n}\thread_i$ can establish $s=\emptyset$ with a positive probability. This issue is resolved by using a stronger guarantee property such as ``$u_{i,j}$ never removes $i$". Therefore, $\|_{i=2}^{\sqrt n}\thread_i$ never removes any prime numbers i.e. any element of ${\cap}_{i=2}^{\sqrt n}O_i$, that does not contain all the positive prime numbers below $n$, occurs with probability $0$.
 
\section{Conclusion}

We have presented an extension of the rely-guarantee calculus that accounts for probabilistic programs running in a shared variable environment. The rely-guarantee rules are expressed and derived by and large by using the algebraic properties of a bundle event structure semantics for concurrent programs. 

In our approach, the specification of a probabilistic concurrent program is expressed with a rely-guarantee quintuple. Each quintuple is defined algebraically through the use of a sequential order $\refby$, which captures all possible sequential behaviours when a suitable definition of the concurrency operation $\|$ is given, and a simulation order $\simref$, which specifies the level of interference between the specified component and the environment. Various probabilistic rely-guarantee rules have been established and applied on a simple example of a faulty concurrent system. We have also shown some rules that provide explicit quantitative properties, including a lower bound for the probability of correctness. In particular, a better lower-approximation can be derived if further internal properties of the systems are known.

The framework developed in this paper has its current limitations. Firstly, neither the algebra nor the event structure model support non-terminating probabilistic concurrent programs at the moment. That is, the rely-guarantee rules of this paper can only be applied in a partial correctness setting. Secondly, the concrete model is restricted to programs with finite state spaces. We will focus particularly on the first limitation in our future work.

\bibliography{./tcs-qapl}

\begin{thebibliography}{10}
\expandafter\ifx\csname url\endcsname\relax
  \def\url#1{\texttt{#1}}\fi
\expandafter\ifx\csname urlprefix\endcsname\relax\def\urlprefix{URL }\fi
\expandafter\ifx\csname href\endcsname\relax
  \def\href#1#2{#2} \def\path#1{#1}\fi

\bibitem{Jon81}
C.~B. Jones, Development methods for computer programs including a notion of
  interference, Ph.D. thesis, Oxford University (June 1981).

\bibitem{Koz81}
D.~Kozen, Semantics of probabilistic programs, J. Comput. Syst. Sci. 22~(3)
  (1981) 328--350.

\bibitem{Jon92}
C.~Jones, Probabilistic non-determinism, Ph.D. thesis, University of Edinburgh,
  Scotland, UK (1989).

\bibitem{He97}
J.~He, K.~Seidel, A.~McIver, Probabilistic models for the guarded command
  language, Sci. Comput. Program. 28~(2-3) (1997) 171--192.

\bibitem{Mci04}
A.~K. McIver, C.~C. Morgan, Abstraction, Refinement And Proof For Probabilistic
  Systems, SpringerVerlag, 2004.

\bibitem{Seg94}
R.~Segala, N.~A. Lynch, Probabilistic simulations for probabilistic processes.,
  Nord. J. Comput. 2~(2) (1995) 250--273.

\bibitem{Arm14}
A.~Armstrong, V.~B.~F. Gomes, G.~Struth, Algebraic principles for
  rely-guarantee style concurrency verification tools, in: C.~B. Jones,
  P.~Pihlajasaari, J.~Sun (Eds.), {FM} 2014, 2014, pp. 78--93.

\bibitem{Hoa11}
C.~A.~R. Hoare, B.~M\"{o}ller, G.~Struth, I.~Wehrman, Concurrent {K}leene
  algebra and its foundations, J. Log. Algebr. Program. 80~(6) (2011) 266--296.

\bibitem{BloomEsik}
S.~L. Bloom, Z.~{\'E}sik, Free shuffle algebras in language varieties,
  Theoretical Computer Science 163~(1{\&}2) (1996) 55--98.

\bibitem{Tar85}
A.~Tarlecki, A language of specified programs., Sci. Comput. Program. 5~(1)
  (1985) 59--81.

\bibitem{Din02}
J.~Dingel, A refinement calculus for shared-variable parallel and distributed
  programming., Formal Asp. Comput. 14~(2) (2002) 123--197.

\bibitem{Col07}
J.~W. Coleman, C.~B. Jones, A structural proof of the soundness of
  {R}ely-guarantee rules., J. Log. Comput. 17~(4) (2007) 807--841.

\bibitem{Roe97}
F.~S. de~Boer, U.~Hannemann, W.~P. de~Roever, A compositional proof system for
  shared variable concurrency., in: J.~S. Fitzgerald, C.~B. Jones, P.~Lucas
  (Eds.), FME, Vol. 1313 of LNCS, Springer, 1997, pp. 515--532.

\bibitem{Win80}
G.~Winskel, Events in computation, Ph.D. thesis, University of Edinburgh,
  Scotland, UK (1980).

\bibitem{Win86}
G.~Winskel, Event structures, in: W.~Brauer, W.~Reisig, G.~Rozenberg (Eds.),
  Advances in Petri Nets, 1986, pp. 325--392.

\bibitem{Lan92}
R.~Langerak, Bundle event structures: a non-interleaving semantics for LOTOS,
  Memoranda informatica, University of Twente, 1992.

\bibitem{Kat93}
J.-P. Katoen, R.~Langerak, D.~Latella, Modeling systems by probabilistic
  process algebra: An event structures approach, in: TC6/WG6.1, FORTE '93,
  North-Holland Publishing Co., The Netherlands, 1994, pp. 253--268.

\bibitem{Kat96}
J.~P. Katoen, Quantitative and qualitative extensions of event structures,
  Ph.D. thesis, University of Twente (1996).

\bibitem{Var03}
D.~Varacca, Probability, nondeterminism and concurrency: two denotational
  models for probabilistic computation, Ph.D. thesis, University of Aarhus
  (2003).

\bibitem{Jon12}
I.~J. Hayes, C.~B. Jones, C.~R. J., Refining rely-guarantee thinking, Tech.
  rep., Newcastle University, United Kingdom (2012).

\bibitem{Rab13}
A.~K. McIver, T.~M. Rabehaja, G.~Struth, Probabilistic concurrent {K}leene
  algebra, in: L.~Bortolussi, H.~Wiklicky (Eds.), QAPL, Vol. 117 of EPTCS,
  2013, pp. 97--115.

\bibitem{Geo10}
S.~Georgievska, S.~Andova, Retaining the probabilities in probabilistic testing
  theory, in: L.~Ong (Ed.), FOSSACS, Vol. 6014 of LNCS, Springer, 2010, pp.
  79--93.

\bibitem{Geo12}
S.~Georgievska, S.~Andova, Probabilistic may/must testing: retaining
  probabilities by restricted schedulers, FAC 24~(4-6) (2012) 727--748.

\bibitem{Rab13b}
A.~K. McIver, T.~M. Rabehaja, G.~Struth, An event structure model for
  probabilistic concurrent {K}leene algebra., in: K.~L. McMillan,
  A.~Middeldorp, A.~Voronkov (Eds.), LPAR, Vol. 8312 of LNCS, Springer, 2013,
  pp. 653--667.

\bibitem{Hoa09a}
C.~A.~R. Hoare, B.~M\"oller, G.~Struth, I.~Wehrman, Concurrent {K}leene
  algebra., in: M.~Bravetti, G.~Zavattaro (Eds.), CONCUR, Vol. 5710 of LNCS,
  Springer, 2009, pp. 399--414.

\bibitem{Bac02}
R.~Backhouse, Galois connections and fixed point calculus, in: R.~Backhouse,
  R.~Crole, J.~Gibbons (Eds.), ACM/MPC, Vol. 2297 of LNCS, Springer Berlin
  Heidelberg, 2002, pp. 89--150.

\bibitem{Mci05}
A.~K. McIver, T.~Weber, Towards automated proof support for probabilistic
  distributed systems, in: G.~Sutcliffe, A.~Voronkov (Eds.), LPAR, Vol. 3835 of
  LNAI, Springer, 2005, pp. 534--548.

\end{thebibliography}

\begin{append}
\newpage
\appendix
\section{Axioms of Kleene algebra and related structures}\label{A:ka}
\subsection{Idempotent semiring}
An \emph{idempotent semiring} is an algebraic structure $(K,+,\cdot,0,1)$ such that, for every $x,y,z{\in}K$, the following axioms hold
\begin{eqnarray}
x + x & = & x,\label{eq:+-idem}\\
x + y & = &y + x,\label{eq:+-comm}\\
x + (y + z) & = & (x + y) + z,\label{eq:+-assoc}\\
x + 0 & = & x, \label{eq:+-zero}\\
x{\cdot} 1 & = & x,\label{eq:rone}\\
1{\cdot} x & = & x, \label{eq:lone}\\
x {\cdot} (y {\cdot} z) & = & (x {\cdot} y) \cdot z,\label{eq:seq-assoc}\\
0{\cdot} x& = & 0, \label{eq:left-zero}\\
x{\cdot} 0& = & 0, \label{eq:righ-zero}\\
(x+y){\cdot} z & = & x{\cdot} z + y\cdot z,\label{eq:+-dist-seq} \\
x{\cdot} y+x{\cdot} z & = & x {\cdot} (y+z).\label{eq:+-rdist-seq}
\end{eqnarray}
\subsection{Kleene algebra}
A \emph{Kleene algebra} is an algebraic structure $(K,+,\cdot,^*,0,1)$ where $(K,+,\cdot,0,1)$ is an idempotent semiring and the Kleene star $(^*)$ satisfies Kozen's axioms:
\begin{eqnarray}
x^* & = & 1 + x{\cdot} x^*,\label{eq:*-unfold}\\
z+x{\cdot} y\leq y & \Rightarrow & x^*{\cdot} z\leq y,\label{eq:*-linduction}\\
z+y{\cdot} x\leq y& \Rightarrow &z{\cdot} x^*\leq y.\label{eq:*-rinduction}
\end{eqnarray}
The induction laws~\ref{eq:*-linduction} (resp. ~\ref{eq:*-rinduction}) implies that $x^*$ is the least fixed point of $\lambda y.1+x{\cdot} y$ (resp. $\lambda y.1 + y{\cdot}x$).
\subsection{Probabilistic Kleene algebra}
A \emph{probabilistic Kleene algebra} has the same signature as Kleene algebra but weakens the distributivity law~\ref{eq:+-rdist-seq} and the induction rule~\ref{eq:*-rinduction} to:
\begin{eqnarray}
x{\cdot} y+x{\cdot} z & \leq & x {\cdot} (y+z),\label{eq:+-subdist-seq}\\
z+y{\cdot} (x+1)\leq y& \Rightarrow &z{\cdot} x^*\leq y.\label{eq:*-rweakinduction}
\end{eqnarray}
\subsection{Concurrent Kleene algebra}
A \emph{concurrent Kleene algebra} is composed of a Kleene algebra $(K,+,\cdot,^*,0,1)$ and a commutative Kleene algebra $(K,+,\|,^{(*)},0,1)$ (i.e. $\|$ is commutative) linked by the interchange law:
\begin{eqnarray}
	(x\|y)\cdot(x'\|y')&\leq& (x\cdot x')\|(y\|y').
\end{eqnarray}
\end{append}

\end{document}